\newif\ifC
\newif\ifAnon
\Anontrue  

\ifC
  \ifAnon
    \documentclass[a4paper,UKenglish,cleveref, autoref, thm-restate, anonymous]{lipics-v2021}
  \else
    \documentclass[a4paper,UKenglish,cleveref, autoref, thm-restate]{lipics-v2021}
  \fi
\else
    \documentclass[11pt]{article}
\fi

\ifC
\else
\fi

\ifC
\author{Avi Kadria}
{Department of Computer Science, Bar-Ilan University, Ramat Gan 5290002, Israel}
{avi.kadria3@gmail.com}
{https://orcid.org/0000-0001-8449-3284}
{}

\authorrunning{A. Kadria}
\Copyright{Avi Kadria}

\ccsdesc[500]{Theory of computation~Shortest paths}
\ccsdesc[500]{Theory of computation~Pseudorandomness and derandomization}
\ccsdesc[300]{Mathematics of computing~Graph algorithms}

\keywords{Girth approximation, shortest cycle, derandomization, weighted graphs, graph algorithms}

\EventEditors{}
\EventNoEds{1}
\EventLongTitle{Symposium on Theoretical Aspects of Computer Science (STACS 2027)}
\EventShortTitle{STACS 2027}
\EventAcronym{STACS}
\EventYear{2027}
\EventDate{March 8--12, 2027}
\EventLocation{G\"ottingen, Germany}
\EventLogo{}
\SeriesVolume{0}
\ArticleNo{0}
\ifAnon\hideLIPIcs\fi

\else
\date{}
\author{Avi Kadria\thanks{Department of Computer Science, Bar-Ilan University, Ramat Gan 5290002, Israel. E-mail: \texttt{avi.kadria3@gmail.com}.}}
\fi

\title{Deterministic Girth Approximation in Weighted Undirected Graphs}
\ifC\titlerunning{Deterministic Girth Approximation in Weighted Graphs}\fi
\usepackage{placeins}
\ifC
\else
\usepackage[margin=1in]{geometry}
\fi
\ifC\else
\usepackage{setspace}
\fi
\usepackage[bottom]{footmisc}

\usepackage{graphicx}
\usepackage{tikz}
\usetikzlibrary{arrows,automata,patterns}
\ifC\else\usepackage{enumitem}\fi
\usepackage{hyperref}
\usepackage{xspace}

\usepackage[linesnumbered,ruled,vlined,boxed,algo2e]{algorithm2e}
\usepackage{amsmath}
\usepackage{float}
\usepackage{mathtools}
\usepackage{amsthm}
\ifC\else
\usepackage{fullpage}
\fi
\usepackage{amsfonts}
\ifC
\else
\usepackage{thmtools}
\usepackage{thm-restate}
\usepackage[nameinlink]{cleveref}
\newtheorem{theorem}{Theorem}

\newtheorem{lemma}{Lemma}

\newtheorem{corollary}[theorem]{Corollary}

\fi

\usepackage{xcolor}
\usepackage{silence}
\begin{document}
\ActivateWarningFilters[pdftoc]

\newcommand{\liam}[1]{{\color{red} \textbf{Liam}: #1}}
\newcommand{\avi}[1]{{\color{purple} \textbf{Avi}: #1}}
\newcommand{\virgi}[1]{{\color{blue} \textbf{Virgi}: #1}}

\newcommand{\eps}{\varepsilon}
\newcommand{\Ot}{\tilde{O}}
\newcommand{\N}{\mathbb{N}}
\newcommand{\RR}{\mathbb{R}}

\newcommand{\codestyle}[1]{\texttt{#1}}
\newcommand{\Initialize}{\mbox{\codestyle{Initialize}}}
\newcommand{\Cycle}{\mbox{\codestyle{Cycle}}}
\newcommand{\ClusterOrCycle}{\codestyle{ClusterOrCycle}}
\newcommand{\ClusterOrCycleBounded}{\codestyle{ClusterOrCycleBounded}}
\newcommand{\BoundedCycle}{\mbox{\codestyle{BoundedCycle}}}
\newcommand{\Grow}{\mbox{\codestyle{Grow}}}

\newcommand{\Cl}{\mathit{CL}}

\ifC
\let\paragraph\subparagraph
\fi

\DeclarePairedDelimiter{\ceil}{\lceil}{\rceil}
\DeclarePairedDelimiter{\floor}{\lfloor}{\rfloor}

\ifC\ifAnon\hypersetup{pdfsubject={Submission to STACS 2027}}\fi\fi
\emergencystretch=3em 

\maketitle

\thispagestyle{empty}

\begin{abstract}
We give deterministic algorithms for approximating the girth~$g$ of an undirected graph with $n$ vertices, $m$ edges, and nonnegative real edge lengths. For every integer $k\geq2$, our algorithm returns a cycle of length at most $\frac{2k}{3}g$ in $O((m+n^{1+2/k})k\log n)$ worst-case time. For every constant~$k$, this matches the stretch and the expected running time of the randomized algorithms of Kadria, Roditty, Sidford, Vassilevska Williams, and Zwick [SODA'23, SICOMP'26] and Kadria, Roditty, and Vassilevska Williams [ESA'26]. In general, it loses only a factor of $O(\min\{k,\log n\})$ in the term that depends on~$m$. The best previous deterministic trade-off, of Kadria, Roditty, Sidford, Vassilevska Williams, and Zwick [SODA'22], has stretch~$k$ instead of $\frac{2k}{3}$ at the same exponent (for even~$k$) and requires integer edge lengths. In particular, for $k=3$ we obtain a deterministic $2$-approximation in $O((m+n^{5/3})\log n)$ time for real edge lengths, whereas the deterministic algorithm of Ducoffe [SIDMA'21] for real lengths has stretch $2+\varepsilon$. 

Our main technical idea is that a deterministic construction of the Thorup--Zwick hierarchy used by these algorithms may stop exploring as soon as it finds a cycle: the approximation analysis only needs clusters truncated at half the best cycle certificate found, so balls stopped in this way need not be hit. Together with a degree-weighted selection rule, this bounds the work of each level by its number of edges and stored ball entries. More generally, a ball need not be hit whenever the analysis only uses clusters up to a radius bounded by a certificate that the algorithm has already found. 
\end{abstract}
\ifC
\clearpage
\else
\clearpage
\pagenumbering{arabic}
\newpage
\fi

\section{Introduction}
Let $G=(V,E,\ell)$ be an undirected graph with $n$ vertices, $m$ edges, and nonnegative real edge lengths. The \emph{girth}~$g$ of~$G$ is the minimum length of a cycle in~$G$. Computing a shortest cycle is a fundamental problem that has been studied extensively, both in unweighted graphs (e.g.,~\cite{ItaiR78,AYZ97,YusterZ97,LingasL09,DBLP:conf/soda/RodittyW12,kadria2022algorithmic}) and in weighted graphs (e.g.,~\cite{LingasL09,DBLP:conf/focs/RodittyW11,RodittyT13,DBLP:journals/siamdm/Ducoffe21,kadria2023improved}). In weighted graphs, computing the girth exactly is equivalent to All-Pairs Shortest Paths under subcubic reductions~\cite{DBLP:conf/focs/RodittyW11,DBLP:journals/jacm/WilliamsW18}, so research has focused on approximation. 
We say an algorithm has \emph{stretch}~$c$ if it returns a cycle of length at most $c\cdot g$.

Lingas and Lundell~\cite{LingasL09} presented a $2$-approximation algorithm for integer lengths in $[1,M]$ that runs in $\Ot(n^2\log M)$ time.\footnote{$\Ot$ hides polylogarithmic factors, assuming $k=O(\log n)$; larger values of~$k$ only increase the stretch, since $n^{1+2/k}=O(n)$ already for $k\geq2\log n$.} Roditty and Tov~\cite{RodittyT13} improved the stretch to $4/3$ in the same $\Ot(n^2\log M)$ time.\footnote{They also obtained stretch $4/3+\eps$ in $\Ot(n^2/\eps)$ time for real lengths.} Ducoffe~\cite{DBLP:journals/siamdm/Ducoffe21} gave the first subquadratic approximation for weighted graphs: a $2$-approximation in $\Ot(n^{5/3}\operatorname{polylog}M+m)$ time for integer lengths in $[1,M]$.\footnote{In addition, he presented a $(2+\eps)$-approximation in $\Ot(n^{5/3}\operatorname{polylog}(1/\eps)+m)$ time for nonnegative real lengths.}

Kadria, Roditty, Sidford, Vassilevska Williams, and Zwick~\cite{kadria2022algorithmic} gave the first trade-off, which is also the best known deterministic one: for every integer $\kappa\geq1$, a cycle of length at most $2\kappa g$ can be found in $O((m+n^{1+1/\kappa}\log n)\log(nM))$ time for integer lengths in $[1,M]$.

Using randomization, this trade-off was improved in two steps. First, Kadria, Roditty, Sidford, Vassilevska Williams, and Zwick~\cite{kadria2023improved,kadria2026improvedsicomp} improved the stretch to $4\kappa/3$: for every integer $\kappa\geq1$, a cycle of length at most $4\kappa g/3$ can be found in $O((m+\kappa n^{1+1/\kappa})\log n+\kappa m)$ expected time. Then, Kadria, Roditty, and Vassilevska Williams~\cite{kadria2026tighter} extended this trade-off to stretch $2k/3$ in $\Ot(m+n^{1+2/k})$ expected time for every integer $k\geq2$.
The additive~$m$ term is necessary, since in weighted graphs one needs to read the input even for an approximation.\footnote{Take a complete graph whose edges have a huge length~$W$, and reduce the lengths of the edges of some triangle to~$1$. An algorithm with stretch~$c$ and $W>3c$ must find all three edges of length~$1$, since every other cycle has length more than~$3c$, and finding $3$ needles among $\Theta(n^2)=\Theta(m)$ edges requires $\Omega(m)$ time.}

Like the classic distance oracles of Thorup and Zwick~\cite{DBLP:journals/jacm/ThorupZ05}, these randomized algorithms sample a hierarchy of vertex sets $V=A_0\supseteq A_1\supseteq\cdots$, and random sampling guarantees that the clusters of this hierarchy, which the algorithms explore, have small expected total size. 
Their approximation analysis, however, applies to every nested hierarchy with the desired properties, so randomness is used only to bound the running time. 
The question is therefore whether the randomized running times can be matched deterministically, while keeping the same running time and stretch tradeoff.

\paragraph{Our results.} In this paper, we answer this question in the affirmative: our algorithms are deterministic, achieve the same stretch, and match the expected running times for every constant~$\kappa$. Formally, we prove the following two theorems.

\begin{theorem}\label{thm:deterministic-girth}
There is a deterministic algorithm that, given an integer $\kappa\geq1$ and a simple undirected graph $G=(V,E,\ell)$ with nonnegative real edge lengths that contains a cycle, returns a simple cycle of length at most $4\kappa g/3$, where $g$ is the girth of~$G$. Its running time is 
\[
 O\!\left(\bigl(m+n^{1+1/\kappa}\bigr)\kappa\log n\right).
\]
\end{theorem}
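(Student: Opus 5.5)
We plan to derandomize the algorithm of~\cite{kadria2023improved,kadria2026improvedsicomp} level by level. That algorithm uses the randomness only to bound cluster sizes. Its approximation analysis works for any nested hierarchy $V=A_0\supseteq A_1\supseteq\cdots\supseteq A_\kappa=\emptyset$, and it only uses the clusters $C(w)=\{v: d(w,v)<d(v,A_{i+1})\}$ for $w\in A_i\setminus A_{i+1}$, truncated at radius $\tilde g/2$, where $\tilde g$ is the best cycle length found so far. The whole proof therefore reduces to two things. First, we construct the hierarchy and its truncated clusters deterministically. Second, we check that the stretch analysis of~\cite{kadria2023improved} still goes through with the modified objects.

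\textbf{Construction of one level.} Given $A_i$, we run a modified Dijkstra from each $v$, exploring its ball $B_i(v)$ of vertices of $A_i$ closer than $A_{i+1}$. Because $A_{i+1}$ is not yet known, we build it greedily. We grow balls in $G$ with a budget, charging each settled vertex by its degree, and aim for degree-weighted size about $m/n^{(i+1)/\kappa}$ (or $n^{(i+1)/\kappa}$ entries). Any ball exceeding the budget must be hit. We hit it with a set selected greedily by degree-weighted frequency, as in deterministic hitting-set constructions. The standard greedy set-cover bound gives $|A_{i+1}|\le \tilde O(n^{1-(i+1)/\kappa})$ with only a $\log n$ loss, and this is the source of the $\log n$ factor.

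\textbf{Early stopping.} The key new rule is that a ball exploration halts as soon as it discovers a non-tree edge closing a cycle. At that moment we record the cycle and update $\tilde g$. We argue that such a ball has radius at least about half of that cycle's length. The analysis never needs cluster information beyond radius $\tilde g/2$, so a stopped ball does not have to be hit. This matters because in a tree-like ball the number of edges scanned is at most the number of vertices plus one, so the degree-weighted work of every ball, stopped or budgeted, is bounded by its stored entries plus $O(1)$. Summing over $v$ and over the $\kappa$ levels gives $O((m+n^{1+1/\kappa})\log n)$ per level, for a total of $O((m+n^{1+1/\kappa})\kappa\log n)$.

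\textbf{Stretch.} We replay the case analysis of~\cite{kadria2023improved}. Take a shortest cycle $C$ and let $i$ be the first level at which some vertex of $C$ has its ball cover the relevant arc. In the uncovered case, the analysis finds a cycle of length $\le 4\kappa g/3$ using pivots $p_{i}(v)$ and the bound $d(v,A_{i+1})\le d(v,A_i)+\text{(arc)}$. If instead the ball of the relevant vertex was stopped early, the cycle it found already has length at most twice its radius, which is at most $4\kappa g/3$. We extract a simple cycle from the closed walk by standard tree-path arguments.

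The main obstacle is the charging in the early-stopping step. We must show that truncating balls at the first cycle neither breaks the invariant $d(v,A_{i+1})\ge$ (truncated radius) needed by the analysis, nor lets nonhit balls at one level blow up the work at the next level. First we would try carrying the invariant ``$v$'s ball at level $i$ is either hit by $A_{i+1}$ or certifies a cycle of length $\le 2\,r_i(v)$'' and proving by induction on $i$ that the stretch argument only queries radii below $\min(\tilde g/2,\,d(v,A_{i+1}))$.
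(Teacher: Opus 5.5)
\textbf{There is a genuine gap in your running-time argument.} Your outline has the right ingredients: truncation at half the best certificate, stopping at the first non-tree edge, leaving stopped balls unhit, and degree-weighted greedy selection. But you grow the ball of each $v$ by a Dijkstra from $v$, and the tree-like argument only bounds the edges scanned by the number of vertices of $V$ settled. It does not bound them by the number of centers of $A_i$ stored. For $i\geq1$ the set $A_i$ is sparse. Before reaching its $b\approx n^{1/\kappa}$ nearest centers, or the cutoff, the search from $v$ may have to settle on the order of $b\,n/|A_i|\approx n^{(i+1)/\kappa}$ vertices, which matches the $n^{(i+1)/\kappa}$ entries you mention. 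Summed over $v$ this is $n^{1+(i+1)/\kappa}$, far above $n^{1+1/\kappa}$. Early stopping does not rescue this, since a graph of large girth is tree-like around every vertex, with many vertices but few centers.

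To charge work to (center, vertex) pairs you must explore from all centers at once. The paper does this with a single batched growth, in which every processed event $(a,u,v)$ does one of three things: it creates an entry, it is one of at most two traversals of a tree edge of $T_a$, or it closes a cycle. This needs three things your proposal lacks.
\begin{itemize}
\item The first cycle, found at key $s$, must stop \emph{all} still-growing balls, and one must show they may stay unhit. Since $U_{\rm new}\le 2s$, the new cutoff is at most $s$. A global event order plus the queue property then shows that every such ball already holds all required centers. The queue property says that a missing required center at an active vertex always has a pending event of no larger key.
\item Propagations into balls that are already full are none of the three cheap kinds. The paper keeps only events into active heads and pays $\deg(v)$ per live vertex per round to insert and remove its incoming edges. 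This cost is exactly what the degree-weighted selection rule controls, since it forces $\sum_{v\in S_t}\deg(v)\le 2m/4^t$.
\item A lazy queue (the buckets $H_{u,j}$) is needed so that the $\deg(u)\,|L(u)|$ entry-edge pairs are never materialized.
\end{itemize}

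\textbf{The hitting-set step loses a logarithm.} Your searches already pay $\log n$ per heap operation, so the set-cover $\log n$ cannot be the source of the single $\log n$ in the bound. It loses an extra factor either way:
\begin{itemize}
\item With balls of size $q\log n$, which keep $|A_{i+1}|\le |A_i|/q$, the $n$-term becomes $n^{1+1/\kappa}\log^2 n$ per level.
\item With balls of size $q$, the losses compound to $|A_i|\le n\bigl((1+\ln n)/q\bigr)^i$, so the last level alone costs about $nq(\log n)^{\kappa-1}$.
\end{itemize}
The paper avoids any loss with Yan's schedule. Quotas double from $b_0=\lceil 8N/r\rceil$, and each round selects greedily only until one eighth of the weight remains. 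This takes fewer than $4N/b_t$ centers per round, which gives $|R|\le r$ and at most $2nb_0=O(nN/r)$ stored entries.

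\textbf{The stretch step needs a different dichotomy.} The radius of a stopped ball is not a priori at most $2\kappa g/3$. The dichotomy must be on the final certificate value $U_0$. Either $U_0\le 4\kappa g/3$, or $U_0/2>2\kappa g/3\ge\Delta_i$ for every $i$. In the second case, cluster searches rerun on the finished hierarchy and truncated at $U_0/2$ satisfy (C1) or (C2), and the approximation analysis of the randomized algorithm applies unchanged.
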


\begin{theorem}\label{thm:deterministic-2k3}
There is a deterministic algorithm that, given an integer $k\geq2$ and a simple undirected graph $G=(V,E,\ell)$ with nonnegative real edge lengths that contains a cycle, returns a simple cycle of length at most $2kg/3$, where $g$ is the girth of~$G$. Its running time is
\[
 O\!\left(\bigl(m+n^{1+2/k}\bigr)k\log n\right).
\]
\end{theorem}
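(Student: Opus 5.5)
Our plan is to derandomize the $2k/3$ algorithm of Kadria, Roditty, and Vassilevska Williams by running its approximation analysis on a hierarchy we build deterministically. The key point is the observation in the abstract: that analysis only uses clusters truncated at half the best cycle certificate found so far, so the hierarchy need only have the hitting property for balls that are explored without finding a cycle.

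\textbf{Step 1: abstract the analysis.} We first restate the randomized analysis as a deterministic lemma. Let $V=A_0\supseteq A_1\supseteq\cdots\supseteq A_{\lceil k/2\rceil}$ be any nested hierarchy, and let $\mu$ be the length of the best cycle found so far. Suppose that for each level $i$ and each vertex $v\in A_i$, either
\begin{itemize}
\item the truncated ball $B_i(v)$ (vertices of $A_i$ closer to $v$ than $A_{i+1}$) is fully explored and hit by $A_{i+1}$ within its radius, or
\item the exploration from $v$ stopped at a radius at least $\mu/2$ because it closed a cycle.
\end{itemize}
Then the usual case analysis along a shortest cycle $C$ (comparing $\ell(C)$ with the radii $d(v,A_{i+1})$ and walking up the levels) yields a cycle of length at most $2kg/3$. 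In the second case, every cluster the analysis would query lies inside the explored region, or the certificate already gives $\mu\le 2\cdot(\text{radius})\le$ the bound needed at that step. We verify this by re-reading each inequality of the randomized proof with $\min(d(v,A_{i+1}),\mu/2)$ in place of $d(v,A_{i+1})$.

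\textbf{Step 2: build the levels deterministically.} We construct level $i+1$ from $A_i$ greedily. From each $v\in A_i$ we run a truncated Dijkstra, ball growing in $A_i$ over the level-$i$ graph. We stop as soon as either
\begin{itemize}
\item a non-tree edge closes a cycle, in which case we record it and update $\mu$, or
\item the ball reaches a target size, $s_i\approx n^{(i+1)\cdot 2/k}$ weighted by degree.
\end{itemize}
Balls stopped by a cycle are discarded. For the remaining full balls we choose $A_{i+1}$ as a greedy hitting set under a degree-weighted selection rule (the classical greedy set cover), giving $|A_{i+1}|\le |A_i|\cdot \tilde O(1/s)$ deterministically. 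Acyclicity is what makes this work: a ball that found no cycle is a tree, so its edge count equals its vertex count minus one. Hence the degree-weighted size is controlled, and exploring it costs no more than its number of stored entries.

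\textbf{Step 3: bound the running time.} We charge the work at level $i$ to the edges scanned plus the stored ball entries. The stored entries number at most $|A_i|\cdot s_i$, which by the hitting-set bound is $O(n^{1+2/k})$ per level. Edges are scanned at most once per level, up to the Dijkstra heap overhead of $O(\log n)$. Summing over the $O(k)$ levels gives $O((m+n^{1+2/k})k\log n)$. The final query phase mirrors the randomized algorithm and adds only a comparable amount. For the simple-cycle requirement, we extract a simple cycle from any closed walk found, which only shortens it.

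\textbf{Main obstacle.} We expect the main obstacle to be Step 2's greedy hitting set within the time budget. Naive greedy set cover costs the total ball size times $\log n$, which is fine. The difficulty is that the balls must be grown before $A_{i+1}$ is known, and their true truncated radii depend on $A_{i+1}$. We would first try fixing the size-$s_i$ prefixes as the sets to hit. Then any hit prefix makes the true truncated ball a subset of it, which is all the analysis needs.
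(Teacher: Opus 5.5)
\textbf{Gap 1: the ball-growing step.} Step~2 is where the difficulty lies, and it does not work as written. The level-$i$ searches cost $\sum_a|\Cl(a)|=\sum_{v\in V}|B_i(v)|$. So what must be hit, or certified complete up to the cutoff, is the ball of nearest $A_i$-centers at \emph{every} $v\in V$, not only at $v\in A_i$. Your sizes $s_i\approx n^{2(i+1)/k}$ and entry count $|A_i|s_i$ suggest balls measured in explored vertices. Such a ball may contain no center of $A_i$ besides its source. Then no hitting set $A_{i+1}\subseteq A_i$ of size $|A_i|\tilde O(1/s_i)$ need exist, and hitting the ball says nothing about $B_i(w)$ for $w\notin A_i$.

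Suppose instead you run an independent truncated Dijkstra from every $v\in V$ until it has found $s$ centers of $A_i$. For $i\geq1$ such a search may visit many more than $s$ vertices, and an edge is scanned once per search that reaches it. Acyclicity of the explored region bounds its edges by its vertices, not by stored entries. So neither ``edges are scanned at most once per level'' nor the $O(n^{1+2/k})$ work bound follows.

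\textbf{The missing idea.} This is the batched growth of \Cref{lem:batched-growth}. All balls $L(v)$, $v\in V$, grow in one multi-source exploration that propagates the entry of a center $a$ at $u$ along $(u,v)$, in a single global key order. A propagation into a vertex that already stores $a$ either runs along an edge of $a$'s shortest-path tree $T_a$ (at most two per entry) or closes a cycle of length at most $2s$. Acyclicity is thus used per center tree, not per ball. The first such cycle ends the growth call, and the new cutoff $U/2$ is at most $s$. Every still-active ball already contains all centers it needs below that cutoff, so it is terminated without being hit. Lazy per-edge indices and buckets process each entry--edge event at most once per level and touch only edges into active vertices. The degree-weighted selection rule makes both the number and the total degree of live balls drop by a factor of four per round. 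Together these give $O((m+nq)\log n)$ per level.

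\textbf{Gap 2: the hitting-set bound loses a logarithm.} Greedy set cover only gives $|A_{i+1}|\leq(1+\ln n)|A_i|/s$. These factors compound over the levels, or force balls larger by $\ln n$, so you lose at least a logarithm against the stated bound. The paper instead uses Yan's doubling-quota schedule (\Cref{lem:selection}), which selects at most $r$ centers while storing $O(nN/r)$ entries.

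\textbf{Gap 3: Step~1 fails for odd $k$.} For odd $k=2\alpha+1$, your Step~1 is not justified. On a hierarchy of depth $\lceil k/2\rceil=\alpha+1$, the analysis of~\cite{kadria2023improved} certifies only $4(\alpha+1)g/3=2(k+1)g/3$. Stretch $2kg/3$ needs the half-level algorithm of~\cite{kadria2026tighter}: $|A_1|\leq n^{1-1/k}$, and Phase~1 on the \emph{exact} sets $B_0(u)$, at cost $\sum_u|B_0(u)|^2\log n$. The paper controls this cost with Yan's schedule for $p=2$ (\Cref{lem:first-level}). It then builds $A_2,\ldots,A_\alpha$ with the truncated one-level construction and $q=n^{2/k}$, and proves the stretch via \Cref{lem:odd-transfer}. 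Even $k$ is simply \Cref{thm:deterministic-girth} with $\kappa=k/2$.

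Finally, the stretch needs no hitting property at all. \Cref{lem:truncation-transfer} holds for every hierarchy and every certificate; hitting only controls the running time.
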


\Cref{thm:deterministic-girth} is the case of even $k=2\kappa$ of \Cref{thm:deterministic-2k3}. We state it separately because its proof contains the core construction (\Cref{sec:truncated-clusters} to~\Cref{sec:deterministic-completion}); \Cref{sec:odd} then adapts the half-level construction of~\cite{kadria2026tighter} to obtain the odd values of~$k$. 

For every constant~$\kappa$, the bound of \Cref{thm:deterministic-girth} matches the expected running time $O((m+\kappa n^{1+1/\kappa})\log n+\kappa m)$ of~\cite{kadria2023improved}. Similarly, the bound of \Cref{thm:deterministic-2k3} matches the bound $O((m+kn^{1+2/k})\log n+km)$ of~\cite[Lemma~9]{kadria2026tighter}. 
For general~$\kappa$, the term that depends on~$n$ is the same, and the term that depends on~$m$ is larger by a factor of $O(\min\{\kappa,\log n\})$, since each of the $\kappa$ levels of our construction scans the edges with $O(\log n)$ heap operations; this loss disappears for constant~$\kappa$ and when $m=O(n^{1+1/\kappa})$. \Cref{tab:results} compares the bounds. Like the algorithms of~\cite{kadria2023improved,kadria2026tighter}, our algorithms only add and compare lengths, so they work for arbitrary nonnegative real lengths; unlike them, they use no hashing. 

\begin{table}[t]
\centering
\small
\renewcommand{\arraystretch}{1.15}
\begin{tabular}{@{}lllll@{}}
\hline
Algorithm & Stretch & Running time & Type & Lengths\\
\hline
\cite{kadria2022algorithmic} & $2\kappa$ & $O((m+n^{1+1/\kappa}\log n)\log(nM))$ & deterministic & integers in $[1,M]$\\
\cite{kadria2023improved,kadria2026improvedsicomp} & $4\kappa/3$ & $O((m+\kappa n^{1+1/\kappa})\log n+\kappa m)$ & randomized & nonnegative reals\\
\cite{yan2026lossless}+\cite{kadria2023improved} & $4\kappa/3$ & $O((m+n\log n)\kappa n^{1/\kappa})$ & deterministic & nonnegative reals\\
\Cref{thm:deterministic-girth} & $4\kappa/3$ & $O((m+n^{1+1/\kappa}) \kappa\log n)$ & deterministic & nonnegative reals\\
\hline
\cite{kadria2026tighter} (odd $k$) & $2k/3$ & $O((m+kn^{1+2/k})\log n+km)$ & randomized & nonnegative reals\\
\Cref{thm:deterministic-2k3} & $2k/3$ & $O((m+n^{1+2/k})k\log n)$ & deterministic & nonnegative reals\\
\hline
\cite{DBLP:journals/siamdm/Ducoffe21} ($k=3$) & $2+\eps$ & $\Ot(m+n^{5/3}\operatorname{polylog}(1/\eps))$ & deterministic & nonnegative reals\\
\hline
\end{tabular}
\caption{Weighted girth approximation; $\kappa\geq1$ and $k\geq2$ are integers. For even~$k$, the second block is covered by the first with $\kappa=k/2$. The bounds of~\cite{kadria2023improved,kadria2026tighter} hold in expectation, and those of the deterministic algorithms in the worst case.}

\label{tab:results}
\end{table}

\paragraph{Techniques.}
The Thorup--Zwick hierarchy can be derandomized by computing, deterministically, a small set that hits a ball of nearest centers around every vertex~\cite{DBLP:conf/icalp/RodittyTZ05,yan2026lossless}. However, computing these balls by shortest-path explorations scans the adjacency list of a vertex once for every center in its ball, which multiplies the edge term by the ball size, about $n^{1/\kappa}$. For example, combining the derandomization of Yan~\cite[Theorem~3]{yan2026lossless} with the algorithm of~\cite{kadria2023improved} gives $O((m+n\log n)\kappa n^{1/\kappa})$ time, which is $\Theta(n^2)$ instead of $\Ot(n^{3/2})$ for $m=n^{3/2}$ and $\kappa=2$. The greedy schedule of Yan~\cite{yan2026lossless}, which grows the balls in rounds with doubling quotas, controls the sizes of the balls, and the cluster searches of~\cite{kadria2023improved} explore clusters lazily. Neither of them, however, constructs the balls in a batch that can be interrupted and resumed. Our new ingredients are the following. 
\begin{enumerate}
 \item \emph{Truncation} (\Cref{lem:truncation-transfer}). Given a cycle certificate of value~$U$, the cluster searches may stop at radius $U/2$, so a ball need not be hit if a short cycle was already found.
 \item \emph{Safe termination} (\Cref{lem:batched-growth}). One batched exploration grows all balls. The first cycle it closes ends the current growth call and lowers the cutoff $U/2$ to at most the distance already explored; later rounds and levels continue with the smaller cutoff. Balls that stop early still contain everything the truncated analysis needs.
 \item \emph{Lazy propagation} (\Cref{sec:implementation}). Every propagation we process creates an entry, follows a shortest-path tree edge, or closes the cycle that ends the growth call. No propagation is processed twice, and a degree-weighted variant of Yan's selection rule makes the total degree of the growing balls drop geometrically.
\end{enumerate}

\paragraph{Determinism and related work.}
The randomized algorithms always return a correct cycle, but bound their running time only in expectation; our bounds hold in the worst case, which is also what is needed when a girth approximation is used as a subroutine of another deterministic algorithm. Compared with the previous deterministic trade-off~\cite{kadria2022algorithmic}, we improve the stretch from $2\kappa$ to $4\kappa/3$ at the same exponent, for nonnegative real lengths. Sampling-based shortest-path algorithms have been derandomized for distance oracles and spanners~\cite{DBLP:conf/icalp/RodittyTZ05,yan2026lossless}, single-source shortest paths~\cite{yan2026lossless}, and diameter approximation~\cite{kirkpatrick2026diameter}. Unlike these derandomizations, ours does not hit every ball, since a ball may stop at a radius set by a cycle certificate found during the construction.

\paragraph{Organization.}
\Cref{sec:preliminaries} introduces notation and the previous tools we use. \Cref{sec:overview} gives an overview of our techniques.  \Cref{sec:deterministic} proves \Cref{thm:deterministic-girth} and, in \Cref{sec:odd}, \Cref{thm:deterministic-2k3}.

\section{Preliminaries}
\label{sec:preliminaries}
Let $G=(V,E,\ell)$ be a simple undirected graph with $n$ vertices, $m$ edges, and nonnegative real edge lengths. As in~\cite{kadria2026improvedsicomp}, we assume that each addition or comparison of lengths takes $O(1)$ time. Zero lengths need no special care: we add an infinitesimal $\eps>0$ to every length, that is, we compare lengths as pairs $(\ell(e),1)$ lexicographically. Our algorithms only add and compare lengths, so they run unchanged, and all lengths become positive. A cycle of perturbed length at most $c$ times the perturbed girth has length at most $c\cdot g$, since the first coordinate of the perturbed girth is~$g$. Hence we may assume that all lengths are positive. The graph is given by adjacency lists, which can be sorted by length in $O(m\log n)$ time. 
We assume that $G$ is connected and contains a cycle; otherwise we find the connected components and discard the acyclic ones in $O(m+n)$ time, run the algorithm on each component, and return the shortest cycle found. All cycles we consider are simple. The girth~$g$ is the minimum cycle length, and $M(C)=\max_{e\in C}\ell(e)$ for a cycle~$C$. A \emph{cycle certificate} describes a cycle~$Z$ and a value $U\geq\ell(Z)$. Our algorithms maintain the smallest certificate value~$U$ found so far, initially $U=+\infty$ with no certificate.

Let $\delta(u,v)$ denote the shortest-path distance and $P(u,v)$ a shortest path from~$u$ to~$v$, and put $\delta(v,A)=\min_{a\in A}\delta(v,a)$, with $\delta(v,\emptyset)=+\infty$. We write $D[a,v]$ for a distance from~$a$ to~$v$ stored by an algorithm. Whenever vertices are compared by their distance from a vertex~$v$, ties are broken in favor of the smaller identifier.

We use the cluster graphs of~\cite{kadria2023improved,kadria2026improvedsicomp}. For an edge $e=\{v,w\}$, let $\delta(u,e)=\min\{\delta(u,v),\delta(u,w)\}+\ell(e)$. The \emph{radius-$r$ subgraph} $G_r(u)$ contains the vertices and edges at distance at most~$r$ from~$u$; $G_{<r}(u)$ uses strict inequalities. 

Given a hierarchy $V=A_0\supseteq\cdots\supseteq A_\kappa=\emptyset$, let $p_i(v)$ be the closest vertex of~$A_i$ to~$v$ (with the smallest identifier among the closest ones), whenever $A_i\neq\emptyset$. For $a\in A_i\setminus A_{i+1}$, its cluster graph $\Cl(a)=(\Cl_V(a),\Cl_E(a))$ is defined by
\begin{align*}
 \Cl_V(a)&=\{v:\delta(a,v)<\delta(v,A_{i+1})\},\\
 \Cl_E(a)&=\{\{v,w\}\in E:\delta(a,v)+\ell(v,w)<\delta(w,A_{i+1})
                 \text{ in at least one orientation}\}.
\end{align*}
We call an orientation $(v,w)$ satisfying this inequality a \emph{cluster orientation}. Every shortest path from~$a$ to a vertex of $\Cl_V(a)$ lies in $\Cl(a)$, and its edges, oriented away from~$a$, are cluster orientations~\cite[Lemma~2]{kadria2026improvedsicomp}.

\paragraph{Cluster searches.}
Our main subroutine is $\ClusterOrCycle$~\cite[Algorithm~2 and Lemma~5]{kadria2026improvedsicomp}, which runs Spira's lazy variant of Dijkstra's algorithm~\cite{Spira73} on $\Cl(a)$. 

\begin{lemma}[{\cite[Algorithm~2 and Lemma~5]{kadria2026improvedsicomp}}]\label{lem:cluster-or-cycle}
Let $a\in A_i\setminus A_{i+1}$. The search $\ClusterOrCycle(a)$ extracts cluster orientations $(v,w)$ of $\Cl(a)$ from a heap in nondecreasing order of their \emph{key} $D[a,v]+\ell(v,w)$. Each discovered vertex inserts its cluster orientations into the heap one at a time in order of length: the first one when the vertex is discovered, and the next one when the previous one has been extracted. When an orientation $(v,w)$ is extracted, one of the following holds.
\begin{enumerate}
 \item If $w$ is new, then $(v,w)$ becomes a tree edge and $D[a,w]=D[a,v]+\ell(v,w)=\delta(a,w)$.
 \item If $w$ is the predecessor of~$v$, then the orientation is the reverse of a tree edge and is skipped.
 \item Otherwise, the search stops and reports a certificate of value $D[a,v]+\ell(v,w)+D[a,w]$.
\end{enumerate}
If the heap becomes empty, the search returns the shortest-path tree of $\Cl(a)$. The search takes $O(|\Cl_V(a)|\log n)$ time.
\end{lemma}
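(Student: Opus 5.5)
We prove \Cref{lem:cluster-or-cycle} by induction on the number of extractions. The invariant is that the search behaves exactly like Dijkstra's algorithm on $\Cl(a)$ for as long as it has not stopped, with Spira's lazy insertion changing only the order in which heap entries are materialized. Concretely, we maintain two facts. First, every discovered vertex~$v$ has $D[a,v]=\delta(a,v)$ and a tree path to~$a$ that is a shortest path made of cluster orientations. Second, the extracted keys are nondecreasing.

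For the second fact, note that each vertex inserts its orientations in nondecreasing order of length. So the key inserted after an extraction of key~$\kappa$ is at least~$\kappa$. Keys inserted upon discovery of~$w$ equal $D[a,w]+\ell(w,\cdot)\geq D[a,w]$, and $D[a,w]$ is the key just extracted. Hence the heap minimum never decreases.

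For case~1, suppose $(v,w)$ is extracted, $w$ is new, and its key is~$\kappa$. We need to show $\delta(a,w)=\kappa$. Suppose instead that $\delta(a,w)<\kappa$. By \cite[Lemma~2]{kadria2026improvedsicomp}, the shortest path $P(a,w)$ lies in $\Cl(a)$ and its edges are cluster orientations; here we first check that $w\in\Cl_V(a)$ follows from $(v,w)$ being a cluster orientation together with the triangle inequality. Let $(x,y)$ be the first edge of this path with $y$ undiscovered. Then $x$ is discovered with $D[a,x]=\delta(a,x)$, so $(x,y)$ has key $\delta(a,y)\leq\delta(a,w)<\kappa$.

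The key step, and the main obstacle, is to show that this orientation $(x,y)$, or some orientation of $x$ of no greater length, is currently in the heap. By the lazy-insertion rule, $x$ always has exactly one pending orientation in the heap (until its list is exhausted), and that pending orientation is the next one in sorted order. All of $x$'s earlier orientations were extracted, and $(x,y)$ was not among them because $y$ is undiscovered. So the pending key is at most the key of $(x,y)$, which is less than~$\kappa$. This contradicts the extraction of key~$\kappa$. The tree-edge and $D$ assignments then give the invariant for~$w$.

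Case~2 is immediate: the orientation is skipped, and the next orientation of~$v$ is inserted.

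For case~3, $w$ is already discovered and is not $v$'s parent. In addition, $(v,w)$ cannot be the tree edge into~$w$, since that edge is oriented $(\mathrm{parent}(w),w)$ and here $v$ would have to be the parent of~$w$, so $(v,w)$ would already have been extracted. Hence the tree path from $a$ to~$v$, the edge $\{v,w\}$, and the tree path from $w$ back to~$a$ form a closed walk of length $D[a,v]+\ell(v,w)+D[a,w]$. Dropping the common prefix of the two tree paths leaves a simple cycle that contains $\{v,w\}$, and its length is at most that value. This is a valid certificate.

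If the heap empties, every cluster orientation from every reachable vertex has been processed. Therefore all of $\Cl_V(a)$ has been discovered, by the same first-undiscovered-edge argument, and the tree is a shortest-path tree of $\Cl(a)$.

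For the running time, each extraction is of type~1, 2, or~3. There are at most $|\Cl_V(a)|$ extractions of type~1, since each discovers a new vertex. Each vertex has exactly one parent, so there are at most $|\Cl_V(a)|$ extractions of type~2. There is at most one extraction of type~3. The heap holds at most one entry per discovered vertex, so there are $O(|\Cl_V(a)|)$ heap operations in total, each costing $O(\log n)$.

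The remaining subtlety is that locating the next cluster orientation of a vertex must not scan non-cluster edges. For this we rely on the implementation in \cite{kadria2026improvedsicomp}: the adjacency lists are sorted, and the cluster-orientation condition $\delta(a,v)+\ell(v,w)<\delta(w,A_{i+1})$ is checked using precomputed values $\delta(w,A_{i+1})$. We then bound the work of stepping past the first failing edge of $v$'s list by charging it to~$v$, or simply take this accounting as established there.
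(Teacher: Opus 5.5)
Your correctness argument is sound. It is the standard Dijkstra/Spira invariant argument behind the cited \cite[Lemma~5]{kadria2026improvedsicomp}. The paper itself gives no proof of this lemma: it imports it, adds Case~2, and attributes the cost of producing each cluster orientation to the edge-access structure of \cite[Section~3.1]{kadria2026improvedsicomp}. The following parts of your proposal are all correct:
\begin{itemize}
 \item the monotonicity of the extracted keys;
 \item the first-undiscovered-vertex argument on $P(a,w)$ in Case~1, including your check that $w\in\Cl_V(a)$ so that \cite[Lemma~2]{kadria2026improvedsicomp} applies;
 \item the remark that $(v,w)$ cannot be the tree edge into~$w$, because each orientation is inserted only once;
 \item the fundamental-cycle bound in Case~3, the empty-heap case, and the count of extractions by type.
\end{itemize}

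The step that fails is the self-contained implementation sketched in your last paragraph. The cluster orientations of~$v$ are not a prefix of its length-sorted adjacency list. The right-hand side of $\delta(a,v)+\ell(v,w)<\delta(w,A_{i+1})$ depends on the head~$w$. So a short edge to a vertex close to~$A_{i+1}$ can fail while a longer edge to a vertex far from~$A_{i+1}$ passes, and passing and failing edges can alternate along the list. Scanning the sorted list and charging only the first failing edge to~$v$ therefore leaves the other skipped edges unpaid for. Such a scan costs $\Theta(\deg(v))$ for every cluster containing~$v$, that is, $\sum_v\deg(v)|B_i(v)|$ per level. That is exactly the cost the bound $O(|\Cl_V(a)|\log n)$, and the paper as a whole, is designed to avoid.

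What you need is a structure that skips non-cluster edges without touching them. One example: for each vertex~$v$ and level~$i$, build a static min-tree over $v$'s length-sorted list, keyed by $\ell(v,w)-\delta(w,A_{i+1})$. The next cluster orientation after position~$p$ is the first index at least~$p$ whose value is below $-\delta(a,v)$, and one descent finds it in $O(\log n)$ time. Since the threshold depends on~$a$ only through $\delta(a,v)$, one structure per pair $(v,i)$ serves all centers. With such a structure, or by citing \cite[Section~3.1]{kadria2026improvedsicomp} as the paper does (your stated fallback), your count of $O(|\Cl_V(a)|)$ heap and access operations gives the claimed running time.
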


An edge-access structure built during preprocessing produces the next cluster orientation in $O(\log n)$ time~\cite[Section~3.1]{kadria2026improvedsicomp}. Case~2 is not stated explicitly in~\cite[Algorithm~2]{kadria2026improvedsicomp}; we add it so that the reverse of a tree edge is not reported as a cycle.

\paragraph{Algorithm \texorpdfstring{$\Cycle$}{Cycle}.}
The algorithm $\Cycle$ of~\cite[Algorithm~4]{kadria2026improvedsicomp} samples the hierarchy, computes $p_i(v)$ and $\delta(v,A_i)$ by multi-source Dijkstra, runs $\ClusterOrCycle(a)$ for every vertex~$a$, and finally performs the \emph{edge-and-pivot checks}: for every edge $\{v,w\}$, in both orientations, and every nonempty~$A_i$, it considers the certificate of value $D[a,v]+\ell(v,w)+D[a,w]$ for $a=p_i(v)$, whenever $\{v,w\}$ is a non-tree edge whose endpoints both lie in the tree of~$a$.\footnote{In~\cite{kadria2026improvedsicomp}, the checks may also use the distances $\delta(p_i(v),v)$ computed during initialization; \Cref{lem:krsvz-analysis} shows that the checks above suffice.} It returns the best certificate found. Its only randomness is in the sampling of the hierarchy and in the hash tables that store distances.

We use the approximation analysis of~\cite{kadria2023improved} in the following form. For a shortest cycle~$C$ and $-1\leq i<\kappa$, define the radii
\begin{equation}\label{eq:analysis-radii}
 \Delta_i=\begin{cases}
  2(i+1)g/3 & \text{if } M(C)\leq g/3,\\
  (i+1)(g-M(C)) & \text{if } M(C)>g/3.
 \end{cases}
\end{equation}
In both cases $\Delta_{-1}=0$ and $\Delta_i\leq2(i+1)g/3\leq2\kappa g/3$ for $i<\kappa$, since $\min(2M(C), g-M(C))\le 2g/3$. The cycle~$C$, the girth~$g$, and the radii~$\Delta_i$ are used only in the analysis; the algorithms do not know~$g$. We call the cluster search for $a\in A_i\setminus A_{i+1}$ \emph{good} if
\begin{enumerate}
 \item[(C1)] it reports a certificate of value at most $2\Delta_i$, or
 \item[(C2)] it returns a predecessor tree~$T_a$ rooted at~$a$ that contains, as an edge, every cluster orientation $(v,w)$ of $\Cl(a)$ with $\delta(a,v)+\ell(v,w)\leq\Delta_i$.
\end{enumerate}
If (C2) holds and $x\in\Cl_V(a)$ satisfies $\delta(a,x)\leq\Delta_i$, then the edges of $P(a,x)$, oriented away from~$a$, are cluster orientations whose keys are the distances of their heads, at most~$\Delta_i$. Hence they all belong to~$T_a$, so $T_a$ contains~$x$ and its root-to-$x$ distance is $D[a,x]=\delta(a,x)$.

\begin{lemma}[Approximation analysis of~\cite{kadria2023improved}]\label{lem:krsvz-analysis}
Fix any hierarchy $V=A_0\supseteq\cdots\supseteq A_\kappa=\emptyset$ and a shortest cycle~$C$. If every cluster search is good, then the certificates reported by the cluster searches and by the edge-and-pivot checks on the returned trees include one of value at most $4\kappa g/3$.
\end{lemma}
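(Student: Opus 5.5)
We follow the case analysis of~\cite{kadria2023improved}, replacing the exact clusters by the guarantee of good searches: the only facts about a tree $T_a$ that we use are those supplied by (C2) for vertices within distance $\Delta_i$ of~$a$. If some search reports a certificate under (C1), its value is at most $2\Delta_i\leq 4(i+1)g/3\leq 4\kappa g/3$, and we are done. So assume every search satisfies (C2).

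Fix a shortest cycle $C$ and a vertex $s\in C$, chosen as in~\cite{kadria2023improved}. If $M(C)\leq g/3$, take any vertex; if $M(C)>g/3$, take an endpoint of the heaviest edge, so that the rest of $C$ forms a path of length $g-M(C)$. Let $i$ be the smallest index with $\delta(u,A_{i+1})>\Delta_i$ for a suitable vertex $u$ of $C$; such an $i<\kappa$ exists because $A_\kappa=\emptyset$ gives $\delta(\cdot,A_\kappa)=\infty$. By minimality, $\delta(u,A_i)\leq\Delta_{i-1}$. Then $a=p_i(u)$ satisfies $\delta(a,u)\leq\Delta_{i-1}$ and is the natural root. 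We distinguish two cases.

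\emph{Case (a).} All of $C$ lies in $\Cl_V(a)$ within distance $\Delta_i$ of $a$. Then (C2) puts every vertex of $C$ into $T_a$ with exact distances. Some edge $\{v,w\}$ of $C$ is not a tree edge, since a tree contains no cycle. If $v$ and $w$ are in $T_a$ and either endpoint has pivot $a$ at level $i$, the edge-and-pivot check yields a certificate of value
\[
\delta(a,v)+\ell(v,w)+\delta(a,w)\leq 2\delta(a,u)+g\leq 2\Delta_{i-1}+g .
\]
The radii~\eqref{eq:analysis-radii} are designed to keep this at most $4\kappa g/3$, using the choice of the edge opposite $s$ when $M(C)>g/3$. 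The check also requires $p_i(v)=a$, which need not hold. Following~\cite{kadria2023improved}, we then move to the vertex $v'$ on $C$ where the pivot changes and run the same bound with $p_i(v')$. In that argument, the inequalities $\delta(v',p_i(v'))\leq\delta(v',a)\leq\Delta_i$ keep $v'$ inside the (C2)-guaranteed part of the tree.

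\emph{Case (b).} Some vertex of $C$ leaves the cluster, or lies farther than $\Delta_i$ from $a$. Then $\delta(x,A_{i+1})\leq\delta(a,x)$ for some $x$ on $C$, and the triangle inequality gives $\delta(u,A_{i+1})\leq\Delta_{i-1}+g/2$ or a similar bound. This contradicts the choice of $i$ once the radii satisfy $\Delta_i-\Delta_{i-1}\geq$ (half-cycle term), which is exactly the increment $2g/3$, respectively $g-M(C)$, in~\eqref{eq:analysis-radii}.

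The main obstacle is Case (a): ensuring that the specific non-tree edge used in~\cite{kadria2023improved} has both endpoints within radius $\Delta_i$ of the correct pivot. Only this region is guaranteed by (C2), whereas the original analysis used the full cluster tree. The plan is to redo the original inequalities step by step and check that every vertex and orientation referenced has key at most $\Delta_i$. This is the step where the $M(C)>g/3$ split matters, since there the heavy edge must be the one closing the cycle and its keys are bounded by $(i+1)(g-M(C))$.
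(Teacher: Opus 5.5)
Your proposal has a genuine gap in Case~(a), which is the step you flag yourself as the main obstacle. You split on whether the \emph{vertices} of~$C$ lie in $\Cl_V(a)$ within distance~$\Delta_i$, but (C2) only guarantees \emph{cluster orientations} of key at most~$\Delta_i$. Vertex membership puts each vertex of~$C$ into~$T_a$ along some shortest path, but says nothing about which edges of~$C$ are tree edges. You are then left with an arbitrary non-tree edge $\{v,w\}$ of~$C$, and its edge-and-pivot check needs $p_i(v)=a$ or $p_i(w)=a$. Moving to a vertex~$v'$ where the pivot changes does not repair this. For $a'=p_i(v')\neq a$ you control $\delta(a',v')$, but not whether the other endpoint lies in~$T_{a'}$, whether the edge is a non-tree edge of~$T_{a'}$, or how far that endpoint is from~$a'$. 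Case~(b) is also imprecise. The bound $\delta(u,A_{i+1})\leq\Delta_{i-1}+g/2$ does not follow from $\delta(x,A_{i+1})\leq\delta(a,x)$. Moreover, for $M(C)>g/2$ the increment $g-M(C)$ is below~$g/2$, so a half-cycle bound cannot work there. You should use the vertex~$x$ itself when $M(C)\leq g/3$, and walk along $C\setminus\{u,u'\}$ to~$u'$ otherwise.

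The missing idea, which is what the paper's proof uses, is to split on specific orientations whose keys are bounded by~$\Delta_i$. Let $a=p_i(u)$. If $a\in A_{i+1}$, then $\delta(u,A_{i+1})\leq\Delta_{i-1}\leq\Delta_i$, which already contradicts minimality.

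If $M(C)\leq g/3$, orient each edge $\{x,y\}$ of~$C$ from the endpoint~$x$ closer to~$u$ on the path $C\setminus\{x,y\}$. Its key is at most $\delta(a,u)+(g+\ell(x,y))/2\leq\Delta_{i-1}+2g/3=\Delta_i$. If $M(C)>g/3$, orient the path $C\setminus\{u,u'\}$ away from~$u$, so every key is at most $\delta(a,u)+g-M(C)\leq\Delta_i$.

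Suppose some orientation $(x,y)$ is not a cluster orientation. Then $\delta(y,A_{i+1})\leq\Delta_i$, and in the heavy case continuing along the path gives $\delta(u',A_{i+1})\leq\Delta_i$. Either way this contradicts minimality.

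Suppose instead all of them are cluster orientations. In the light case, (C2) would put all of~$C$ into~$T_a$, which is impossible, so this search satisfies (C1). Under your reduction this branch cannot occur at all. In the heavy case, (C2) puts the path into~$T_a$, so $\{u,u'\}$ is a non-tree edge with both endpoints at exact distance. The check of $(u,u')$ at the pivot $p_i(u)=a$ then gives $2\delta(a,u)+g\leq2\Delta_{i-1}+g\leq4(i+1)g/3$. The only edge ever checked is incident to~$u$, whose pivot is~$a$ by construction, so the pivot problem never arises.
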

\begin{proof}
The proof of~\cite[Lemmas~6--9 and Corollaries~7.1 and~9.1]{kadria2026improvedsicomp} applies, with every use of~\cite[Lemma~5]{kadria2026improvedsicomp} replaced by (C1) and~(C2). That proof shows, by induction on~$i$, that either some certificate has value at most $4(i+1)g/3$, or a vertex of~$C$ (if $M(C)\leq g/3$) or an endpoint of a longest edge $\{u,u'\}$ of~$C$ (if $M(C)>g/3$) is within distance~$\Delta_i$ of~$A_{i+1}$. Since $A_\kappa=\emptyset$, the first alternative must eventually hold.

\cite[Lemma~5]{kadria2026improvedsicomp} is used only for the search from a center $a\in A_i\setminus A_{i+1}$ in the inductive step, where the relevant edges of~$C$ have cluster orientations of key at most~$\Delta_i$ at~$a$~\cite[Lemmas~6 and~8]{kadria2026improvedsicomp}. If $M(C)\leq g/3$, these orientations form all of~$C$, so (C2) would put~$C$ into the tree~$T_a$; hence (C1) holds and gives a certificate of value at most~$2\Delta_i$. If $M(C)>g/3$, they form the path $C\setminus\{u,u'\}$. If (C1) fails, then (C2) puts this path into~$T_a$, so $\{u,u'\}$ is not a tree edge, $D[a,u]$ and $D[a,u']$ are exact by the remark above, and the edge-and-pivot check of $(u,u')$ at the pivot $a=p_i(u)$ yields a certificate of value at most $2\Delta_{i-1}+g\leq4(i+1)g/3$, as in~\cite[Lemma~8]{kadria2026improvedsicomp}.
\end{proof}

\paragraph{Cycle extraction.}
We use the standard cycle representation of~\cite[Section~2.5]{kadria2026improvedsicomp}. Let $T_a$ be a shortest-path predecessor tree rooted at~$a$, with root-to-$x$ distance $D[a,x]$, and let $P_{T_a}(x,y)$ denote its unique tree path. A non-tree edge $e=\{u,v\}$ with both endpoints in~$T_a$ defines, for $w=\operatorname{LCA}_{T_a}(u,v)$, the simple cycle
\begin{equation}\label{eq:fundamental-cycle}
 Z=P_{T_a}(w,u)\cup\{e\}\cup P_{T_a}(w,v),\qquad
 \ell(Z)=D[a,u]+\ell(e)+D[a,v]-2D[a,w].
\end{equation}
The cycle~$Z$ can be extracted in $O(n)$ time from the predecessor chains of~$u$ and~$v$~\cite[Section~2.5]{kadria2026improvedsicomp}.

\section{Technical overview}
\label{sec:overview}

We describe the construction for \Cref{thm:deterministic-girth}.

\paragraph{The randomized algorithm.}
For a hierarchy $V=A_0\supseteq\cdots\supseteq A_\kappa=\emptyset$, the \emph{bunch} $B_i(v)$ of a vertex~$v$ consists of the centers of~$A_i$ that are closer to~$v$ than every center of~$A_{i+1}$, that is, the centers whose level-$i$ cluster contains~$v$. By double counting, the total size of the level-$i$ clusters is $\sum_v|B_i(v)|$. Besides $O((m+\kappa n)\log n)$ time for preprocessing and $O(\kappa m)$ time for the edge-and-pivot checks, the running time of~\cite{kadria2023improved} is $O(\log n)$ times the total cluster size, which is $O(\kappa n^{1+1/\kappa})$ in expectation if each center of~$A_i$ is sampled into~$A_{i+1}$ with probability $n^{-1/\kappa}$.

\paragraph{Choosing a level deterministically.}
Fix $A=A_i$. For every vertex~$v$, its \emph{ball} $L(v)$ is a list of the centers of~$A$ nearest to~$v$, in order of distance. If $L(v)$ contains a center of the next level $R=A_{i+1}$, then $B_i(v)\subseteq L(v)$. The greedy schedule of Yan~\cite{yan2026lossless} chooses~$R$ in rounds. In each round, it grows every unhit ball to a quota of~$b$ centers, and greedily selects centers until at most a quarter of the full balls remain unhit. It then stops growing the hit balls and doubles~$b$. Starting with a quota of $\Theta(n^{1/\kappa})$, it selects at most $|A|/n^{1/\kappa}$ centers. Since the number of unhit balls drops by a factor of four whenever the quota doubles, the balls store $O(n^{1+1/\kappa})$ entries in total. We use a weighted variant of this rule that also makes the total degree of the unhit balls drop by a factor of four.

\paragraph{The obstacle: growing the balls.}
Balls are grown by a simultaneous shortest-path exploration from all centers: an entry for center~$a$ stored at vertex~$u$ is propagated along the edges of~$u$. Propagating every entry along every edge costs $\sum_u\deg(u)|L(u)|$, which can be $\Theta(mn^{1/\kappa})$.

We use cycles to avoid this. Suppose the entry of~$a$ at~$u$ is propagated along $(u,v)$ at distance $s=D[a,u]+\ell(u,v)$, and $v$ already stores~$a$ at distance $D[a,v]\leq s$. Then either $\{u,v\}$ is an edge of the shortest-path tree of~$a$, or this tree together with $\{u,v\}$ contains a cycle of length at most $D[a,u]+\ell(u,v)+D[a,v]\leq2s$. Hence every propagation creates an entry, is one of at most two propagations along a tree edge per entry, or finds a cycle, and we stop the exploration at the first cycle. We use a lazy event queue that never touches an edge before it is needed and never processes an event twice. Together with the weighted selection rule, this bounds the cost of all rounds of a level by $O((m+n^{1+1/\kappa})\log n)$.

\paragraph{Truncation.}
Stopping at a cycle leaves balls that are neither full nor hit. Let $U$ be the smallest certificate value found so far. Our key observation (\Cref{lem:truncation-transfer}) is that if $U>4\kappa g/3$, the analysis of~\cite{kadria2023improved} uses cluster information only up to distance $\tfrac23\kappa g<U/2$ from each center; if $U\leq4\kappa g/3$, the cycle found already suffices. It is therefore enough to bound the \emph{truncated} bunches, consisting of the centers $a\in B_i(v)$ with $\delta(a,v)<U/2$; dually, each cluster is explored only up to distance $U/2$ from its center (\Cref{fig:cluster-truncation}). 

A cycle found at distance~$s$ gives $U\leq2s$, so the new cutoff $U/2$ is at most~$s$, and the exploration has already processed everything below it. Every ball that is still growing therefore contains all centers it needs and can be terminated without being hit. Balls that were already full continue in later rounds under the smaller cutoff. Terminated balls receive no new entries, but their entries are still propagated. We prove this in \Cref{lem:batched-growth} (see \Cref{fig:termination}).

\begin{figure}[tb]
\centering
\begin{tikzpicture}[
  x=1cm,y=1cm,font=\small,
  vertex/.style={circle,draw=black!70,fill=white,inner sep=0pt,minimum size=5pt},
  retained/.style={vertex,double,double distance=1pt},
  card/.style={draw=black!60,rounded corners=1pt,minimum width=.5cm,minimum height=.34cm,
    inner sep=1pt,font=\scriptsize},
  stored/.style={card,fill=blue!12,draw=blue!65!black},
  missing/.style={card,dashed,draw=black!45,text=black!45},
  tree/.style={draw=black!70,line width=.7pt},
  chord/.style={->,>=stealth,draw=orange!80!black,dashed,line width=1.1pt},
  event/.style={->,>=stealth,draw=blue!65!black,line width=1.3pt},
  annotation/.style={align=center,font=\scriptsize}
]
\begin{scope}
  \node[font=\footnotesize\bfseries,anchor=west] at (-.4,1.2) {(a)};
  \coordinate (A) at (0,0); \coordinate (U) at (2.0,0);
  \coordinate (X) at (4.45,0); \coordinate (V) at (6.4,0);
  \draw[black!35,dotted,line width=.9pt] (A)--(U) (X)--(V);
  \draw[event] (U)--(X) node[midway,above=1pt,annotation,text=blue!65!black] {pending event\\$(a,u,x)$};
  \node[vertex,label=above:{$a$}] at (A) {};
  \node[retained,label=above:{$u$}] at (U) {};
  \node[vertex,label=above:{$x$}] at (X) {};
  \node[vertex,label=above:{$v$}] at (V) {};
  \node[stored] at (0,-.42) {$a$};
  \node[stored] at (2.0,-.42) {$a$};
  \node[missing] at (4.45,-.42) {$a$};
  \node[missing] at (6.4,-.42) {$a$};
  \node[annotation,anchor=north] at (0,-.7) {root};
  \node[annotation,anchor=north] at (2.0,-.7) {may be terminated;\\entries retained};
  \node[annotation,anchor=north] at (4.45,-.7) {first vertex\\without~$a$: active};
  \node[annotation,anchor=north] at (6.4,-.7) {active,\\misses~$a$};
\end{scope}
\begin{scope}[xshift=8.6cm]
  \node[font=\footnotesize\bfseries,anchor=west] at (-1.1,1.2) {(b)};
  \coordinate (a) at (0,1.0); \coordinate (w) at (0,.25);
  \coordinate (u) at (-.7,-.6); \coordinate (v) at (.7,-.6);
  \draw[tree] (a)--node[left,annotation]{2}(w) (w)--node[left,annotation]{1}(u) (w)--node[right,annotation]{2}(v);
  \draw[chord] (u)--node[below,annotation]{2}(v);
  \node[vertex,label=right:{$a$}] at (a) {};
  \node[vertex,label=right:{$w$}] at (w) {};
  \node[vertex,label=left:{$u$}] at (u) {};
  \node[vertex,label=right:{$v$}] at (v) {};
  \begin{scope}[xshift=1.3cm,yshift=.45cm,x=.5cm]
    \fill[blue!12] (0,-.11) rectangle (4.5,.11);
    \draw[->,>=stealth,black!70] (0,0)--(5.8,0);
    \foreach \t in {0,4.5,5} \draw[black!70] (\t,-.1)--(\t,.1);
    \node[annotation,below=2pt] at (0,0) {$0$};
    \node[annotation,below=2pt,text=blue!65!black] at (4.1,-.05) {$\rho_{\rm new}=\tfrac92$};
    \node[annotation,above=2pt,text=orange!80!black] at (5,.05) {$s=5$};
    \node[annotation,above=2pt,text=blue!65!black] at (2.0,.05) {already stored};
    \node[annotation,anchor=west] at (0,-.85) {$D[a,u]=3$, $D[a,v]=4$};
    \node[annotation,anchor=west] at (0,-1.25) {$U_{\rm new}=9$, $\ell(Z)=5$};
  \end{scope}
\end{scope}
\end{tikzpicture}
\caption{Illustration of safe termination. (a)~The first vertex $x\in P(a,v)$ that does not store~$a$ is active, so the event $(a,u,x)$ is pending. (b)~A cycle closed at $s=5$ gives $U_{\rm new}=9$, so the new cutoff $9/2$ is at most~$s$.}
\label{fig:termination}
\end{figure}

\paragraph{Completing the algorithm.}
We construct the levels one by one, sharing the best certificate among them; during the construction, $U$ and the cutoff $U/2$ may decrease. We then fix the final value~$U_0$ and run the cluster searches and edge-and-pivot checks of~\cite{kadria2023improved} on the resulting hierarchy, with every cluster search truncated at $U_0/2$. Each level takes $O((m+n^{1+1/\kappa})\log n)$ time, which gives the running time $O((m+n^{1+1/\kappa})\kappa\log n)$.

\paragraph{Stretch $2k/3$.}
The algorithm of~\cite{kadria2026tighter} for odd~$k$ differs in its first level, which is sampled with probability $n^{-1/k}$ instead of $n^{-2/k}$ and whose sets $B_0(u)$ are computed explicitly. Since these sets are computed in full anyway, the first level can be chosen by Yan's schedule directly, without cycle detection; the remaining levels are constructed as above (see \Cref{sec:odd} for more details).

\section{Deterministic girth approximation algorithm}
\label{sec:deterministic}

In this section we prove \Cref{thm:deterministic-girth}, and in \Cref{sec:odd} we prove \Cref{thm:deterministic-2k3}.

\subsection{Truncating the cluster searches}
\label{sec:truncated-clusters}

For $A'\subseteq A\subseteq V$ and $\rho\in(0,+\infty]$, define the \emph{truncated bunch} of a vertex~$v$ and the \emph{truncated cluster} of a center~$a$ as the vertex sets
\begin{align}
 B_{A,A'}^{<\rho}(v)
 &:=\{a\in A:\delta(a,v)<\min\{\delta(v,A'),\rho\}\},\label{eq:truncated-bunch}\\
 \Cl_{A,A'}^{<\rho}(a)
 &:=\{v\in V:\delta(a,v)<\min\{\delta(v,A'),\rho\}\}.
\label{eq:truncated-cluster}
\end{align}
For $A=A_i$, $A'=A_{i+1}$, and $a\in A_i\setminus A_{i+1}$, the set $\Cl_{A_i,A_{i+1}}^{<\rho}(a)$ is the vertex set of $\Cl(a)\cap G_{<\rho}(a)$, illustrated in \Cref{fig:cluster-truncation}. Double counting gives
\begin{equation}\label{eq:truncated-duality}
 \sum_{a\in A\setminus A'}|\Cl_{A,A'}^{<\rho}(a)|
 =\sum_{v\in V}|B_{A,A'}^{<\rho}(v)|.
\end{equation}

\begin{figure}[tb]
\centering
\begin{tikzpicture}[
  x=1cm,y=1cm,font=\small,
  vertex/.style={circle,draw=black!50,fill=white,inner sep=0pt,minimum size=4pt},
  retained vertex/.style={vertex,draw=blue!65!black,fill=blue!65!black},
  graph edge/.style={draw=black!35,line width=.6pt},
  retained edge/.style={draw=blue!65!black,line width=1.3pt}
]
\def\clusterboundary{(-2.85,-.25)
  .. controls (-2.95,1.1) and (-1.95,1.8) .. (-.85,1.65)
  .. controls (.2,1.65) and (.85,1.15) .. (.95,.45)
  .. controls (1.05,-.25) and (.75,-1.15) .. (-.2,-1.35)
  .. controls (-1.35,-1.7) and (-2.65,-1.4) .. cycle}
\fill[black!4] \clusterboundary;
\begin{scope}
  \clip (0,0) circle (1.55);
  \fill[blue!12] \clusterboundary;
\end{scope}
\draw[black!55,line width=.7pt] \clusterboundary;
\draw[teal!70!black,dashed,line width=1pt] (0,0) circle (1.55);

\coordinate (a) at (0,0);
\coordinate (p) at (-.65,.8);
\coordinate (q) at (-.95,-.45);
\coordinate (v) at (.55,.55);
\coordinate (r) at (.35,-.85);
\coordinate (s) at (-1.95,.85);
\coordinate (t) at (-2.05,-.8);
\coordinate (z) at (-2.55,.05);
\coordinate (x) at (1.3,-.35);
\coordinate (b) at (2.25,-.35);
\coordinate (y) at (1.8,.85);
\draw[graph edge] (p)--(s)--(z)--(t)
  (v)--(y)--(b)--(x) (p)--(v) (q)--(r);
\draw[retained edge] (a)--(p) (a)--(q) (a)--(v) (a)--(r);
\foreach \point in {s,t,z,x,y}
  \node[vertex] at (\point) {};
\foreach \point in {p,q,r}
  \node[retained vertex] at (\point) {};
\node[retained vertex,label=below left:{$a$}] at (a) {};
\node[retained vertex,label=above left:{$v$}] at (v) {};
\node[rectangle,draw=black!65,fill=white,inner sep=2pt,
  label=right:{$z\in A_{i+1}$}] at (b) {};

\node[anchor=south] at (-2.7,1.15) {$\Cl(a)$};
\node[anchor=west,align=left] at (1.95,1.35)
  {Cluster vertices:\\[-1pt]$\delta(a,v)<\delta(v,A_{i+1})$};
\draw[black!55] (.9,.85)--(1.2,1.35)--(1.85,1.35);
\node[anchor=north,text=teal!70!black] at (.2,-1.65)
  {$\rho=U/2$};
\node[anchor=west,text=blue!65!black,align=left] at (1.95,-1.2)
  {Returned tree inside\\[-1pt]$\Cl(a)\cap G_{<\rho}(a)$};
\end{tikzpicture}
\caption{Illustration of the truncated cluster of~$a$: the tree returned by the search (thick) and the cutoff $\rho=U/2$ (dashed).}
\label{fig:cluster-truncation}
\end{figure}

Let $\ClusterOrCycleBounded(a,\rho)$ denote $\ClusterOrCycle(a)$ stopped before extracting an edge with key at least~$\rho$. The \emph{stopping value} $\tau\in(0,\rho]$ of the bounded run is the key of the cycle-closing edge if it reports a cycle, and $\rho$ otherwise (including when the cluster is exhausted).

\begin{lemma}[Bounded cluster search]\label{lem:bounded-search}
Let $a\in A_i\setminus A_{i+1}$, $\rho\in(0,+\infty]$, and let $\tau$ be the stopping value of $\ClusterOrCycleBounded(a,\rho)$. Then:
\begin{enumerate}
 \item[(B1)] If a cycle is reported, its certificate value is at most~$2\tau$.
 \item[(B2)] The tree returned by the search contains, as an edge, every cluster orientation $(v,w)$ with $\delta(a,v)+\ell(v,w)<\tau$, and hence a shortest path to every $x\in\Cl_V(a)$ with $\delta(a,x)<\tau$.
\end{enumerate}
Moreover, the search takes
\begin{equation}\label{eq:bounded-cluster-time}
 O\!\left((1+|\Cl_{A_i,A_{i+1}}^{<\rho}(a)|)\log n\right)
\end{equation}
time.
\end{lemma}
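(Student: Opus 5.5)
The three claims all follow from \Cref{lem:cluster-or-cycle}, once we note that the bounded search is the unbounded one cut off at a threshold in key order.

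\textbf{Invariant.} Extracted orientations have nondecreasing keys. Every tree edge $(v,w)$ created when $w$ is new satisfies $D[a,w]=\delta(a,w)=\text{key}$. So at any moment, every tree vertex $x$ has $D[a,x]$ at most the key of the current extraction.

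\textbf{(B1).} Suppose a cycle is reported when $(v,w)$ is extracted with key $\tau=D[a,v]+\ell(v,w)$. Then $w$ is already in the tree, so by the invariant $D[a,w]\le\tau$. The certificate value is $D[a,v]+\ell(v,w)+D[a,w]\le 2\tau$. The cycle itself comes from the fundamental cycle of \eqref{eq:fundamental-cycle}, whose length is at most this value, so it is a valid certificate.

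\textbf{(B2).} Every extraction with key $<\tau$ fell into case 1 or case 2 of \Cref{lem:cluster-or-cycle}. This holds both when $\tau$ is the key of the cycle-closing edge and when $\tau=\rho$.

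Take a cluster orientation $(v,w)$ with $\delta(a,v)+\ell(v,w)<\tau$. We show it was extracted before the stop. Induct on key order, or argue as in the proof of correctness of Spira/Dijkstra:
\begin{itemize}
\item $v$ lies in $\Cl_V(a)$ with $\delta(a,v)<\tau$. Its shortest path consists of cluster orientations with keys $\le\delta(a,v)<\tau$, so by induction $v$ is discovered with $D[a,v]=\delta(a,v)$.
\item $v$ inserts its orientations in increasing length. Each lighter orientation has key $\le D[a,v]+\ell(v,w)<\tau$, and by induction was extracted before the stop.
\item Hence $(v,w)$ was inserted, and since its key is below $\tau$, it was extracted before the stop, falling into case 1 or case 2.
\end{itemize}
In case 1, $(v,w)$ is a tree edge. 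In case 2, $w$ is the predecessor of $v$, so the edge is a tree edge in the reverse orientation $(w,v)$. This needs care: the tree contains the edge, though not necessarily in orientation $(v,w)$. I expect the intended reading is "contains $\{v,w\}$ as an edge". The path part then follows exactly as in the remark after (C2), since the path edges, oriented away from $a$, have key equal to the distance of their head, which is $<\tau$.

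\textbf{Running time.} Each discovered vertex $w$ other than $a$ has $D[a,w]=\delta(a,w)$ equal to the key of an extracted orientation, so $\delta(a,w)<\rho$. Since $w$ is reached through a cluster orientation, it is in $\Cl_V(a)$. Hence the discovered vertices lie in $\Cl^{<\rho}_{A_i,A_{i+1}}(a)\cup\{a\}$.

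Each extraction is one of three kinds:
\begin{itemize}
\item it creates a vertex, charged to that vertex;
\item it is a reverse tree edge, at most one per vertex;
\item it is the final cycle-closing extraction, or the one peeked-at orientation with key $\ge\rho$.
\end{itemize}
Every insertion follows a discovery or an extraction. Each heap operation or edge-access query costs $O(\log n)$, giving \eqref{eq:bounded-cluster-time}.

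The main obstacle is the charging in the time bound: showing that no extraction can fall outside these three kinds. This is where case 2 needs care, since a vertex has exactly one predecessor and so contributes at most one reverse-edge extraction.
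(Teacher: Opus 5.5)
Your proposal is correct and follows essentially the same route as the paper. The three parts match as follows:
\begin{itemize}
 \item (B1) holds because the already-discovered endpoint $w$ has $D[a,w]\le\tau$.
 \item (B2) is proved by a least-key (equivalently, inductive) argument. The vertex $v$ is discovered via the last edge of $P(a,v)$. Then its orientations up to $(v,w)$ are inserted and extracted in turn, since nothing of key below $\tau$ can remain in the heap at the stop.
 \item The time bound comes from charging extractions to the discovered vertices of $\Cl_{A_i,A_{i+1}}^{<\rho}(a)$, at most one reverse tree edge per vertex, and one final event.
\end{itemize}
Your undirected reading of (B2) is also the paper's, which likewise says that an extracted orientation of key below $\tau$ is ``a tree edge or the reverse of one''. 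One point to tighten: orientations of $v$ tied in length with $(v,w)$ should be handled by following $v$'s insertion chain, not by induction on key alone.
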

\begin{proof}
If the edge $(v,w)$ closes a cycle, then $D[a,v]+\ell(v,w)=\tau$ and $w$ was discovered earlier, at a key of at most~$\tau$, so the certificate value $D[a,v]+\ell(v,w)+D[a,w]$ is at most~$2\tau$. This proves (B1).

An extracted orientation with key less than~$\tau$ does not close a cycle, so it is a tree edge or the reverse of one. Suppose that some cluster orientation with $\delta(a,v)+\ell(v,w)<\tau$ is never extracted, and choose one, $(v,w)$, with minimum key. Vertex~$v$ is discovered: either $v=a$, or the last edge of $P(a,v)$ is a cluster orientation with the smaller key $\delta(a,v)$, which is therefore extracted. By \Cref{lem:cluster-or-cycle}, vertex~$v$ inserts its cluster orientations in order of length, and those up to $(v,w)$ have keys at most $\delta(a,v)+\ell(v,w)<\tau$. The search stops only when it extracts an orientation of key~$\tau$, when the heap minimum reaches $\rho\geq\tau$, or when the heap is empty. Hence every orientation of~$v$ up to $(v,w)$ is inserted and then extracted in turn, a contradiction. The second part of (B2) follows by the argument before \Cref{lem:krsvz-analysis}.

For the running time, an edge $e=\{v,w\}$ extracted in orientation $(v,w)$ has key $D[a,v]+\ell(e)=\delta(a,v)+\ell(e)$ by \Cref{lem:cluster-or-cycle}, and $\max\{\delta(a,e),\delta(a,w)\}\leq\delta(a,v)+\ell(e)<\rho$. Hence every extracted edge and both of its endpoints lie in $\Cl(a)\cap G_{<\rho}(a)$. Before stopping, only tree edges, at most one reverse tree edge per discovered vertex, and possibly one cycle-closing edge are extracted, and the heap holds at most one pending orientation per discovered vertex. Since every discovered vertex lies in $\Cl_{A_i,A_{i+1}}^{<\rho}(a)$, the bound~\eqref{eq:bounded-cluster-time} follows.
\end{proof}

For a radius~$\Delta$, we call the (bounded or unbounded) cluster search for $a\in A_i\setminus A_{i+1}$ \emph{$\Delta$-good} if it satisfies (C1) or (C2) with $\Delta$ in place of~$\Delta_i$. 

\begin{corollary}\label{cor:bounded-good}
If $\Delta<\rho$, then $\ClusterOrCycleBounded(a,\rho)$ is $\Delta$-good.
\end{corollary}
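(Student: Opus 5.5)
The plan is a case split on the stopping value $\tau\in(0,\rho]$ of $\ClusterOrCycleBounded(a,\rho)$ from \Cref{lem:bounded-search}, comparing $\tau$ with $\Delta$. Recall that $\Delta$-good means (C1) or (C2) with $\Delta$ in place of~$\Delta_i$. So we must show one of two things: the search reports a certificate of value at most $2\Delta$, or it returns a tree containing every cluster orientation $(v,w)$ with $\delta(a,v)+\ell(v,w)\leq\Delta$.

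\emph{Case $\tau>\Delta$.} This case includes every run that reports no cycle, since then $\tau=\rho>\Delta$ by hypothesis. In a run that stops at a cycle, the orientations extracted before the cycle-closing edge already form a predecessor tree rooted at~$a$, and we treat that as the returned tree. By (B2), this tree contains every cluster orientation $(v,w)$ with $\delta(a,v)+\ell(v,w)<\tau$. Every orientation with key at most $\Delta$ has key strictly less than~$\tau$, so (C2) holds with $\Delta$.

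\emph{Case $\tau\leq\Delta$.} Here $\tau<\rho$, so the stopping value is not $\rho$. By the definition of the stopping value, the run therefore reported a cycle. By (B1), its certificate value is at most $2\tau\leq2\Delta$, so (C1) holds.

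The only delicate point is the boundary between strict and non-strict inequalities, and it is exactly what the hypothesis $\Delta<\rho$ handles. (B2) is strict in $\tau$, while (C2) is non-strict in $\Delta$. When no cycle is found we have $\tau=\rho$, so we need $\Delta<\rho$ for orientations with key exactly $\Delta$ to be covered. I expect no real obstacle beyond checking that "the tree returned" in (C2) is meaningful when the search stops at a cycle. That matters only in the first case, and there the partial tree is what (B2) describes.
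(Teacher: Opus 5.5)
Your proposal is correct and follows essentially the same argument as the paper: a case split on $\tau\leq\Delta$ (then $\tau<\rho$ forces a reported cycle and (B1) gives (C1)) versus $\tau>\Delta$ (then (B2) gives (C2), since keys at most $\Delta$ are strictly below $\tau$). Your remark that in a cycle-stopped run the ``returned tree'' is the partial tree also matches how the paper uses (B2).
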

\begin{proof}
Let $\tau$ be the stopping value. If $\tau\leq\Delta$, then $\tau<\rho$, so the search reported a cycle, and by~(B1) its certificate has value at most $2\tau\leq2\Delta$; this is~(C1). Otherwise $\tau>\Delta$, and by~(B2) the returned tree contains every cluster orientation $(v,w)$ with $\delta(a,v)+\ell(v,w)\leq\Delta<\tau$; this is~(C2).
\end{proof}

\paragraph{The bounded algorithm.}
Given a hierarchy $V=A_0\supseteq\cdots\supseteq A_\kappa=\emptyset$ and a cycle certificate of value~$U_0$ (where $U_0=+\infty$, with no certificate, is allowed), let $\BoundedCycle$ denote algorithm $\Cycle$ with the following changes: 
\begin{itemize}
 \item the initialization phase of $\Cycle$ uses the given sets instead of sampling them, and the best certificate is initialized to the given one;
 \item every cluster search is $\ClusterOrCycleBounded(a,U_0/2)$, with the cutoff $U_0/2$ fixed even when better certificates are found, and the edge-and-pivot checks use the returned trees;
 \item all hash tables are replaced by balanced search trees.
\end{itemize}

\begin{lemma}[Truncation transfer]\label{lem:truncation-transfer}
For every hierarchy $V=A_0\supseteq\cdots\supseteq A_\kappa=\emptyset$ and every cycle certificate of value~$U_0$ (or $U_0=+\infty$), algorithm $\BoundedCycle$ returns a cycle of length at most~$4\kappa g/3$.
\end{lemma}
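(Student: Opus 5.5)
The proof splits into two cases according to how the given certificate value $U_0$ compares with $4\kappa g/3$. The returned cycle always has length at most the best certificate value found. That certificate is initialized to the given one, and later certificates only lower it. So every certificate produced by $\BoundedCycle$ has value at least the length of its cycle, and the returned cycle has length at most $\min$ of all certificate values, including $U_0$.

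\emph{Case $U_0\leq 4\kappa g/3$.} The initial certificate already describes a cycle $Z$ with $\ell(Z)\leq U_0\leq4\kappa g/3$. Since the algorithm never replaces its best certificate by a worse one, the returned cycle has length at most $U_0$, and we are done.

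\emph{Case $U_0>4\kappa g/3$ (including $U_0=+\infty$).} Set $\rho=U_0/2>2\kappa g/3$. By the remark after \eqref{eq:analysis-radii}, every radius satisfies $\Delta_i\leq2(i+1)g/3\leq2\kappa g/3<\rho$ for $0\leq i<\kappa$. So \Cref{cor:bounded-good}, applied with $\Delta=\Delta_i$, shows that each search $\ClusterOrCycleBounded(a,U_0/2)$ for $a\in A_i\setminus A_{i+1}$ is $\Delta_i$-good, that is, good in the sense of (C1)/(C2). The hierarchy is an arbitrary fixed nested hierarchy with $A_\kappa=\emptyset$, and \Cref{lem:krsvz-analysis} holds for every such hierarchy. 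It therefore gives, among the certificates reported by the cluster searches and by the edge-and-pivot checks on the returned trees, one of value at most $4\kappa g/3$. Hence the returned cycle has length at most $4\kappa g/3$.

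The step that needs care is checking that the modifications in $\BoundedCycle$ do not break the hypotheses of \Cref{lem:krsvz-analysis}. Three points need checking.
\begin{itemize}
 \item The edge-and-pivot checks now run on truncated trees. The lemma only needs checks on the returned trees whose endpoints lie in the tree of $p_i(v)$. This is exactly how the checks are defined in $\Cycle$, and (C2) supplies the exactness of $D[a,u]$ and $D[a,u']$.
 \item The cutoff stays fixed at $U_0/2$ even after better certificates are found. This only makes the searches larger, so goodness, which depends only on $\rho>\Delta_i$, is unaffected.
 \item Replacing hash tables by search trees changes only the running time, not the values stored.
\end{itemize}
Finally, every certificate produced corresponds to a genuine simple cycle via \eqref{eq:fundamental-cycle}, with length at most its value, so the cycle actually returned meets the bound.
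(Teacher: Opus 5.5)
Your proof is correct and follows the paper's argument: the case $U_0\le 4\kappa g/3$ is immediate, and otherwise $\rho=U_0/2>2\kappa g/3\ge\Delta_i$ makes every bounded search good by \Cref{cor:bounded-good}, so \Cref{lem:krsvz-analysis} yields the certificate. Your extra checks on the fixed cutoff, the edge-and-pivot checks on the returned trees, and the replaced dictionaries are sound; they spell out bookkeeping that the paper leaves implicit.
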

\begin{proof}
If $U_0\leq4\kappa g/3$, the given certificate suffices. Otherwise $\rho=U_0/2>2\kappa g/3\geq\Delta_i$ for every $i<\kappa$, by~\eqref{eq:analysis-radii}. Hence, by \Cref{cor:bounded-good}, every bounded search is good, and \Cref{lem:krsvz-analysis} yields a certificate of value at most $4\kappa g/3$.
\end{proof}

\subsection{Constructing one level}
\label{sec:growable-prefixes}

Fix the current level $A\subseteq V$ with $N=|A|$. We must choose the next level $R\subseteq A$ so that $R$ is small and the truncated bunches $B_{A,R}^{<U_{\rm out}/2}(v)$ have small total size, where $U_{\rm out}$ is the certificate value at the end of the construction of the level.

\paragraph{Balls.}
Order the centers of~$A$ at a vertex~$v$ by the key
\[
 \theta_v(a):=(\delta(a,v),\operatorname{id}(a)),
\]
and let $\operatorname{rank}_v(a)$ be the position of~$a$ in this order. Every vertex~$v$ stores a list $L(v)$ of \emph{entries}, called its \emph{ball}. An entry records a center $a\in A$, a distance $D[a,v]$, and a pointer to a predecessor entry of~$a$ at a neighbor of~$v$. The ball is \emph{exact} if its centers are the first $|L(v)|$ centers in the order~$\theta_v$, in this order, and $D[a,v]=\delta(a,v)$ for each entry. Since ties are broken by identifiers, an exact ball may contain only some of the centers at its largest distance. We identify $L(v)$ with its set of centers when writing inclusions. Finally, the \emph{hit threshold} of~$v$ is
\[
 \sigma_R(v):=\min_{c\in R}\theta_v(c),
\]
with value $(+\infty,+\infty)$ if $R=\emptyset$.

\begin{restatable}[One-level construction]{lemma}{onelevellemma}\label{lem:one-level}
Let $A\subseteq V$, $N=|A|\geq1$, and let $1\leq r\leq N$ be an integer. Given a cycle certificate of value $U_{\rm in}$ (or $U_{\rm in}=+\infty$ with no certificate), a deterministic algorithm computes $R\subseteq A$ and a cycle certificate of value $U_{\rm out}\leq U_{\rm in}$ (none if $U_{\rm out}=+\infty$) such that
\begin{equation}\label{eq:one-level-bounds}
 |R|\leq r
 \qquad\text{and}\qquad
 \sum_{v\in V}|B_{A,R}^{<U_{\rm out}/2}(v)|=O(nN/r).
\end{equation}
Its running time is $O((m+nN/r)\log n)$.
\end{restatable}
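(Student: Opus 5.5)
The plan is to implement Yan's greedy schedule in rounds, with a degree-weighted selection rule, and to grow the balls by one batched, lazily propagated multi-source exploration that is cut off at the first cycle it closes. Maintain a cutoff $\rho=U/2$, where $U$ is the current certificate value; initially $U=U_{\rm in}$. Start with quota $b_0=\lceil N/r\rceil$ and let $S_0=V$ be the set of growing vertices.

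In round $j$ with quota $b_j=2^jb_0$, grow every ball $L(v)$ with $v\in S_j$ until one of three things happens: it holds $b_j$ entries, it runs out of centers at distance $<\rho$, or the growth call is interrupted by a cycle. The growth is a Dijkstra-like event queue over triples $(a,u,x)$, with key $D[a,u]+\ell(u,x)$. Adjacency lists are sorted, and each entry at $u$ keeps a pointer to its next unscanned edge, so only one pending event per entry sits in the heap. When an event reaches $x$, there are three cases. If $x$ lacks $a$ and is still growing, the event creates an entry. If $x$ already stores $a$ via this edge, the event is a reverse tree edge; at most two of these occur per entry. Otherwise it closes a cycle of value at most $2s$, where $s$ is the key. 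In that case we record the certificate, set $U\leftarrow\min(U,2s)$, and terminate the growth call. The still-growing balls are declared \emph{terminated}; they are never hit, but their entries remain valid.

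After growth, the full balls $F_j$ are handled by the selection rule. Weight each $v$ by $1+\deg(v)$ and repeatedly pick the center lying in the most weight of unhit full balls. Continue until the unhit full balls carry at most a quarter of the total weight of $F_j$. A standard averaging argument shows that $O(|F_j|\text{-weight}\cdot\log(\cdot)/b_j)$ centers suffice; the careful version uses the greedy set-cover charging of Yan, where each selection of a center in $x$ balls charges $1/b_j$ per unit weight. Summed over rounds, this gives $|R|\le r$ after fixing constants in $b_0$. Set $S_{j+1}$ to be the unhit full balls. Balls that are hit, exhausted below $\rho$, or terminated leave the process.

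The correctness invariant to prove, analogous to \Cref{lem:batched-growth}, is that at the end every $v$ satisfies one of the following. Either $L(v)$ is an exact ball containing a center $c\in R$, in which case $B_{A,R}(v)\subseteq L(v)$. Or $L(v)$ is exact and contains all centers at distance $<U_{\rm out}/2$. The key step, and I expect the main obstacle, is the termination case. When a cycle is closed at key $s$, all events of key $<s$ have been processed, and the new cutoff satisfies $U/2\le s$. I would argue via the first vertex $x$ on $P(a,v)$ missing $a$ (\Cref{fig:termination}). That vertex has a pending event of key $\le\delta(a,v)$, so every center $a$ with $\delta(a,v)<s$ is already in $L(v)$, provided no predecessor on the path was terminated earlier. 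For this I need terminated and hit balls to keep propagating their existing entries, and the non-growing vertices on paths to still be explored. I would handle this by letting inactive vertices relay events without storing new entries, or, more simply, by noting that a hit vertex's ball already covers every $a$ of smaller $\theta$. In either case, truncated bunches of terminated or exhausted vertices lie inside $L(v)$. The cutoff only decreases, so $U_{\rm out}\le$ every cutoff used.

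For the size bound, $|L(v)|\le b_j$ whenever $v$ leaves in round $j$. Moreover the weight of $S_j$ is at most $4^{-j}(n+2m)$, so $\sum_v|L(v)|\le\sum_j b_0 2^j\cdot n4^{-j}\cdot O(1)=O(nN/r)$, which gives \eqref{eq:one-level-bounds}. For the running time, each processed event creates an entry, is one of at most $O(1)$ tree events per entry, or ends the call. Scanning edges at $u$ costs $O(\deg(u))$ per round in which $u$ is growing, for a total of $\sum_j\text{weight}(S_j)=O(m+n)$. Heap operations cost $O(\log n)$ each. Selection can be done with a bucket/heap of center counts in time linear in the stored entries times $\log n$. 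Altogether this gives $O((m+nN/r)\log n)$.
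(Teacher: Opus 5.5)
\textbf{Gap 1: lazy propagation.} Your propagation scheme does not give the claimed running time, and this is the main missing idea. You keep, for each entry $(a,u)$, a pointer into the sorted adjacency list of $u$. Your case analysis of an extracted event $(a,u,x)$ has three cases: new entry, reverse tree edge, cycle. It omits the case where $x$ does not store $a$ and is not growing, because it is full, hit, or terminated. Such events are charged neither to an entry nor to a cycle. There can be $\sum_u\deg(u)|L(u)|$ of them, which can be of order $mN/r$, for instance at a high-degree vertex whose neighbours fill up early while it keeps accumulating entries. This is exactly the cost the lemma must avoid.

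You also cannot simply advance the pointer past such an edge. A full, unhit ball at $x$ resumes in the next round. If the event $(a,u,x)$, with $u$ the predecessor of $x$ on $P(a,x)$, was consumed, then no pending event brings $a$ to $x$ at key $\theta_x(a)$. The queue property fails, and $L(x)$ need not stay exact. Your bound of $O(\deg(u))$ per round in which $u$ is growing does not describe your own structure: scanning is per entry, and it happens at tails, which keep propagating after they stop growing.

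The paper transposes the pointer. Every oriented edge $e=(u,v)$ stores the index $j(e)$ of the next entry of $L(u)$ to send along it, and this index is never reset within a level. Edges with active heads are grouped into buckets $H_{u,j}$ by tail and index. Only the minimum edge of a bucket whose $j$th entry exists sits in the global queue. When a head becomes inactive, all its incoming edges are removed at once, and they are reinserted with their saved indices when it resumes. As a result, events into inactive heads are never processed, and bucket maintenance costs $O(\sum_t\deg(S_t))$ heap operations. Every processed event then creates an entry, is one of at most $2X$ tree-edge duplicates, or ends the call.

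\textbf{Gap 2: the selection weight.} With $w(v)=1+\deg(v)$ and a one-quarter threshold you only get $|S_{j+1}|+\deg(S_{j+1})\le\frac14\bigl(|S_j|+\deg(S_j)\bigr)$. Hence $|S_j|\le\min\{n,4^{-j}(n+2m)\}$, not the $|S_j|\le n4^{-j}$ you use. When $m\gg n$, the greedy rule may remove three quarters of the weight by hitting only high-degree balls. It can then leave almost all low-degree balls unhit for about $\log_4(m/n)$ rounds while the quota doubles. The bound that actually follows is $X=O\bigl(b_0\sqrt{n(n+m)}\bigr)$, which is not $O(nN/r)$.

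You need the count and the total degree to each drop by a constant factor \emph{separately}: the count to bound $X$, and the degree to pay for the head-based bucket maintenance. The paper uses the normalized weight $\omega(v)=\deg(F_0)+|F_0|\deg(v)$, for which $W/W_0=\frac12\bigl(|F'|/|F_0|+\deg(F')/\deg(F_0)\bigr)$, and stops at $W\le W_0/8$, so each fraction is at most $1/4$.

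\textbf{Remaining steps.} The rest of your outline matches the paper: cutoff termination via the first vertex of $P(a,v)$ missing $a$, the new cutoff $U/2\le s$, and the greedy selection count with a larger constant in $b_0$. In the termination argument, make explicit that this first missing vertex $x$ is active. This follows from $\operatorname{rank}_x(a)\le\operatorname{rank}_v(a)\le b$, together with the invariant that every permanently terminated ball contains all centers required at it.
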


The construction (\Cref{alg:one-level}) proceeds in \emph{rounds} $t=0,1,\ldots$. Round~$t$ consists of one \emph{growth call} with quota $b_t=\min\{N,2^tb_0\}$, followed by one \emph{selection phase}. We use a weighted variant of the schedule of Yan~\cite[Section~2, Algorithm~1]{yan2026lossless} with $p=1$, where the cost of a ball of size~$x$ is~$x^p$: quotas double, and each selection phase hits balls greedily by the weights of \Cref{alg:one-level} until at most one eighth of their total weight remains unhit. This makes both the number and the total degree of the unhit balls drop by a factor of four (\Cref{lem:selection}).

\begin{algorithm2e}[tbp]
\caption{One-level construction}\label{alg:one-level}
\KwIn{$A\subseteq V$, budget $r$, and a cycle certificate of value $U$ (possibly $+\infty$)}
$R\leftarrow\emptyset$; $L(v)\leftarrow\emptyset$ for all $v\in V$; mark all balls live\;
\lForEach{$a\in A$}{insert the root entry $(a,D[a,a]=0)$, with no predecessor, in $L(a)$}
$b_0\leftarrow\lceil8|A|/r\rceil$; $t\leftarrow0$\;
\While{a live ball remains}{
 $b_t\leftarrow\min\{|A|,2^t b_0\}$\;
 $\Grow(A,R,b_t,U,L)$ (\Cref{alg:growth}), updating $U$ and the ball states\;
 $F\leftarrow\{v:L(v)\text{ is quota-paused}\}$; $F_0\leftarrow F$\;
 $\omega(v)\leftarrow\deg(F_0)+|F_0|\deg(v)$ for all $v\in F$, where $\deg(F_0)=\sum_{v\in F_0}\deg(v)$; $W_0\leftarrow\sum_{v\in F}\omega(v)$\;
 \While{$8\sum_{v\in F}\omega(v)>W_0$}{
  Choose $c\in A\setminus R$ maximizing $\sum_{v\in F:\,c\in L(v)}\omega(v)$, breaking ties by smaller identifier\;
  $R\leftarrow R\cup\{c\}$\;
  Mark every $v\in F$ with $c\in L(v)$ hit-terminated and remove it from $F$\;
 }
 $t\leftarrow t+1$\;
}
\KwRet{$R$, and $U$ with its certificate if $U<+\infty$}
\end{algorithm2e}

The root entry of~$a$ is the first entry of $L(a)$, since $\theta_a(a)=(0,\operatorname{id}(a))$ and every other center has positive distance from~$a$. Root entries count toward the first quota. 

\paragraph{Ball states.}
A ball is \emph{live} until it is permanently terminated. During a growth call with quota~$b$, a live ball is \emph{active} if $|L(v)|<b$, and \emph{quota-paused} if $|L(v)|=b$. Only active balls receive new entries; a quota-paused ball becomes active again when the quota increases, unless selection terminates it first.

A ball is permanently terminated in one of two ways. It is \emph{hit-terminated} when selection chooses a center already in its stored ball, or when growth appends a selected center to it. It is \emph{cutoff-terminated} when growth stops at the current cutoff. \Cref{fig:prefix-states} shows all transitions. In every state, the stored entries of a ball are still propagated to other vertices.

\begin{figure}[tb]
\centering
\begin{tikzpicture}[
  x=1cm,y=1cm,font=\small,
  state/.style={draw=black!70,rounded corners=3pt,minimum width=2.6cm,
    minimum height=.8cm,align=center,fill=white},
  live state/.style={state,draw=blue!65!black,line width=.9pt},
  final state/.style={state,double,draw=black!70},
  move/.style={->,>=stealth,line width=.7pt,draw=black!75},
  growth/.style={move,draw=blue!65!black},
  selection/.style={move,draw=orange!80!black,dashed},
  annotation/.style={align=center,font=\footnotesize}
]
\draw[blue!65!black,dashed,rounded corners=6pt] (-1.75,1.25) rectangle (7.75,-1.45);
\node[anchor=north west,text=blue!65!black,font=\footnotesize] at (-1.7,1.2) {live};

\node[live state] (active) at (0,0) {active\\[-1pt]$|L(v)|<b$};
\node[live state] (paused) at (6,0) {quota-paused\\[-1pt]$|L(v)|=b$};
\node[final state] (cutoff) at (0,-3.4) {cutoff-terminated};
\node[final state] (hit) at (6,-3.4) {hit-terminated};

\draw[move] (-2.4,0) -- (active);
\draw[growth] (active) to[bend left=18]
  node[annotation,above] {growth: quota reached} (paused);
\draw[selection] (paused) to[bend left=18]
  node[annotation,below,pos=.3] {next round:\\quota doubles} (active);
\draw[growth] (active) --
  node[annotation,left,align=right,pos=.7] {growth stops: cycle is found, \\or queue minimum ${}\geq\rho$} (cutoff);
\draw[selection] (paused) --
  node[annotation,right,align=left,pos=.65] {selection:\\$L(v)$ contains\\the chosen center} (hit);
\draw[growth] (active.south east) to[out=-40,in=180]
  node[annotation,pos=.55,above right=-1pt] {growth: selected\\center arrives} (hit.west);
\end{tikzpicture}
\caption{Ball states. Solid (blue) transitions happen during growth calls, and dashed (orange) ones between them.}
\label{fig:prefix-states}
\end{figure}

\paragraph{Properties of growth calls.}
\Cref{sec:batched-growth} describes $\Grow$ and proves (\Cref{lem:batched-growth}) that it has the following properties.
\begin{enumerate}
 \item[(G1)] Every ball is always exact.
 \item[(G2)] Growth appends a center~$a$ to $L(v)$ only if $\theta_v(a)\leq\sigma_R(v)$, and it hit-terminates~$v$ if $a\in R$. In particular, growth keeps every live ball disjoint from~$R$.
 \item[(G3)] Only active balls receive entries, and when a growth call returns, every live ball is quota-paused.
\end{enumerate}

\begin{restatable}[Selection]{lemma}{selectionlemma}\label{lem:selection}
Assume (G1)--(G3), and let $S_t$ be the set of vertices whose balls are live at the start of round~$t$. Then $|S_t|\leq n/4^t$ and $\sum_{v\in S_t}\deg(v)\leq2m/4^t$. Moreover, \Cref{alg:one-level} performs $O(\log n)$ rounds and returns a set~$R$ with $|R|\leq r$, and when it terminates, the total number of stored entries satisfies
\begin{equation}\label{eq:exposed-mass}
 X:=\sum_{v\in V}|L(v)|\leq2nb_0=O(nN/r),
\end{equation}
and all selection phases together take $O((nN/r)\log n)$ time.
\end{restatable}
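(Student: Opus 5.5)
The proof has four parts: a geometric decay of the live sets driven by the choice of weights, a greedy covering argument for $|R|\le r$, a charging argument for $X$, and a heap implementation of selection.

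\textbf{Decay of the live sets.} Fix round~$t$ and let $F_0$ be the set of quota-paused balls after the growth call. By (G3) every live ball is quota-paused at that moment, so $F_0\subseteq S_t$. Moreover $S_{t+1}$ is exactly the set $F$ that remains when the selection phase ends, because cutoff-terminated and hit-terminated balls are never live again.

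Summing the weights gives
\[
 \sum_{v\in F}\omega(v)=|F|\deg(F_0)+|F_0|\deg(F),
\qquad
 W_0=2|F_0|\deg(F_0).
\]
The stopping rule $8\sum_{v\in F}\omega(v)\le W_0$ bounds each of the two nonnegative terms by $|F_0|\deg(F_0)/4$. This yields $|F|\le|F_0|/4$ and $\deg(F)\le\deg(F_0)/4$. Hence $|S_{t+1}|\le|S_t|/4$ and $\deg(S_{t+1})\le\deg(S_t)/4$, and by induction from $|S_0|\le n$ and $\deg(S_0)\le 2m$ the first claim follows.

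The same decay bounds the number of rounds: $S_t=\emptyset$ once $4^t>n$. In addition, if $b_t=N$, every full ball contains all of~$A$, so a single pick empties~$F$. Either way there are $O(\log n)$ rounds.

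\textbf{Bounding $|R|$.} I would use a greedy set-cover argument for each round. By (G1)--(G2), every ball in~$F$ has exactly $b_t$ centers, all of them in $A\setminus R$. Therefore
\[
 \sum_{c\in A\setminus R}\ \sum_{v\in F:\,c\in L(v)}\omega(v)=b_t\,\omega(F),
\]
and the chosen center removes at least a $b_t/N$ fraction of the remaining weight. Suppose $k$ picks are made in the round. The weight before the last pick still exceeds $W_0/8$, so $(1-b_t/N)^{k-1}>1/8$. This gives $k\le\lceil(N/b_t)\ln 8\rceil$ when $b_t<N$, and $k=1$ when $b_t=N$.

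I expect the summation over rounds to be the main obstacle, because the ceilings contribute an additive $1$ per round. Rounds with $b_t<N$ satisfy $2^tb_0<N$, so there are at most $\log_2(N/b_0)+1\le\log_2(r/8)+1$ of them. Their fractional parts sum to at most
\[
 2\ln 8\cdot N/b_0\le\frac{\ln 8}{4}\,r\approx0.52\,r .
\]
At most one further pick comes from the round with quota~$N$. If $r<8$, then $b_0\ge N$, so there is a single round and one pick. For $r\ge8$, I would check that $0.52r+\log_2(r/8)+2\le r$; if the constants turn out too tight, I would tighten them, for example by using $(1-x)\le e^{-x}$ more carefully.

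\textbf{Bounding $X$.} Charge each vertex to the last round~$t$ in which its ball is live. The vertex then lies in~$S_t$, and its ball holds at most $b_t\le2^tb_0$ entries; root entries are included in this count. Hence
\[
 X\le\sum_t|S_t|\,2^tb_0\le\sum_t\frac{n}{4^t}\,2^tb_0\le2nb_0 .
\]

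\textbf{Running time of selection.} For each center~$c$ keep the key $\sum_{v\in F:\,c\in L(v)}\omega(v)$ in a max-heap. These keys are built by scanning the entries of the balls in~$F_0$. When a ball~$v$ is removed from~$F$, scan $L(v)$ and decrease the key of every center in it, each at $O(\log n)$ cost. One round therefore costs $O(|F_0|\,b_t\log n)$. Summed over rounds this is $O(\sum_t|S_t|\,b_t\log n)=O(nb_0\log n)=O((nN/r)\log n)$, where the last step uses $b_0=O(N/r)$, valid because $r\le N$.
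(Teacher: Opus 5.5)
Your proposal is correct and follows the paper's proof in every part: the weight identity forcing both $|F|\le|F_0|/4$ and $\deg(F)\le\deg(F_0)/4$, the greedy $b_t/N$-fraction argument, charging entries to the termination round, and the per-round max-heap. One small point to add: dividing by $\deg(F_0)$ uses that every vertex has degree at least $1$, which holds because $G$ is connected and contains a cycle.

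The summation you were worried about does close, since $(1-\tfrac{\ln 8}{4})r\ge\log_2(r/8)+2$ for all $r\ge8$. The paper avoids this numerical check by using $N/b_t>1$ to absorb the ceiling, $\lceil (N/b_t)\ln 8\rceil<(\ln 8+1)N/b_t<4N/b_t\le r/2^{t+1}$. The rounds with $b_t<N$ then make fewer than $r$ picks by a geometric sum, and the single pick of a quota-$N$ round fits into the remaining slack.
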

\ifC
We prove \Cref{lem:selection} in \Cref{app:selection-proof}. The proof adapts the $p=1$ analysis of Yan to weighted balls. By (G1) and (G2), every ball in~$F$ consists of $b_t$ centers outside~$R$. If $F'$ is the set of balls that are not hit and $W$ is their total weight, then $W/W_0=\frac12\bigl(|F'|/|F_0|+\deg(F')/\deg(F_0)\bigr)$. Hence, stopping at $W\leq W_0/8$ leaves at most $|F_0|/4$ balls, whose total degree is at most $\deg(F_0)/4$. A greedy choice removes at least a $b_t/N$ fraction of the remaining weight, so a round with $b_t<N$ makes fewer than $4N/b_t\leq r/2^{t+1}$ selections. The bound on $|S_t|$ gives $X\leq2nb_0$.
\else
\begin{proof}
We first show that every live ball satisfies $L(v)\cap R=\emptyset$. This holds initially since $R=\emptyset$. Growth preserves it by (G2), and selection preserves it since it terminates every live ball that contains the newly selected center. Therefore, by (G1), every ball in~$F$ consists of exactly $b_t$ centers of $A\setminus R$.

\paragraph{The number and the total degree of the live balls.}
Let $F_0$ be the set~$F$ when the weights are assigned, so that $\omega(v)=\deg(F_0)+|F_0|\deg(v)$, where $\deg(F_0)=\sum_{v\in F_0}\deg(v)$. Since $G$ is connected and contains a cycle, every vertex has degree at least~$1$, and therefore $\deg(F_0)\geq|F_0|\geq1$ whenever $F_0\neq\emptyset$. The initial weight is $W_0=\sum_{v\in F_0}\omega(v)=2|F_0|\deg(F_0)$. If $F_0=\emptyset$, the selection phase is vacuous and the bounds below hold trivially, so we assume that $F_0\neq\emptyset$, and hence $W_0>0$. 

Let $F'$ be the set of balls of~$F_0$ that are not hit when the selection phase ends, and let $W=\sum_{v\in F'}\omega(v)$. By the definition of the weights, $W=|F'|\deg(F_0)+|F_0|\deg(F')$, where $\deg(F')=\sum_{v\in F'}\deg(v)$, that is,
\[
 \frac{W}{W_0}=\frac12\left(\frac{|F'|}{|F_0|}+\frac{\deg(F')}{\deg(F_0)}\right).
\]
By the stopping rule, $W/W_0\leq1/8$. Both fractions in this identity are nonnegative, so each is at most $1/4$, that is, $|F'|\leq|F_0|/4$ and $\deg(F')\leq\deg(F_0)/4$. 

By (G3), every ball that is live after the growth call of round~$t$ is quota-paused. Hence, the balls that are live at the start of round $t+1$ are exactly the balls of~$F'$. Since $F_0\subseteq S_t$, it follows that $|S_{t+1}|\leq|S_t|/4$ and $\sum_{v\in S_{t+1}}\deg(v)\leq\frac14\sum_{v\in S_t}\deg(v)$. Since $S_0\subseteq V$, we get that $|S_t|\leq n/4^t$ and $\sum_{v\in S_t}\deg(v)\leq2m/4^t$, as required.

\paragraph{The number of selected centers.}
Consider the selection phase of round~$t$ with $b_t<N$, and let $W$ be the total weight of the balls that are not hit yet. If we sum, over all centers of $A\setminus R$, the weight of these balls that contain the center, we get $b_tW$. Therefore, some center has weight at least $b_tW/N$, and each greedy selection multiplies~$W$ by at most $1-b_t/N$. After $j$ selections, we have $W\leq W_0(1-b_t/N)^j<W_0\,e^{-jb_t/N}$. Thus, the phase ends after at most $\lceil N\ln8/b_t\rceil<4N/b_t$ selections, where the inequality follows from $\ln8+1<4$ and $N/b_t>1$. This adapts the $p=1$ analysis of Yan~\cite{yan2026lossless} to weighted balls. Since $b_t\geq2^tb_0\geq2^{t+3}N/r$, round~$t$ makes fewer than $r/2^{t+1}$ selections.

Next, we consider the case that $b_t=N$. In this case, a quota-paused ball contains all of~$A$, so it can exist only if $R=\emptyset$, and then one selection hits all of~$F$. Hence, round~$t$ is the last round whenever $b_t=N$, and there are at most $1+\max\{0,\lceil\log_2(N/b_0)\rceil\}=O(\log n)$ rounds. If every round has $b_t<N$, the rounds make fewer than $\sum_{t\geq0}r/2^{t+1}=r$ selections. Otherwise, the last round~$t$ has $b_t=N$ and makes at most one selection. If $t\geq1$, then round $t-1$ had $b_{t-1}=2^{t-1}b_0<N$, and since $b_0\geq8N/r$, we get $2^{t-1}\cdot8N/r<N$, that is, $r/2^t>4$. Moreover, the previous rounds made fewer than $r(1-2^{-t})$ selections, so $|R|<r(1-2^{-t})+1<r$. If $t=0$, we use $r\geq1$. Therefore, $|R|\leq r$, as required.

\paragraph{The number of entries and the selection time.}
Terminated balls keep their entries, which are counted in~$X$. At most $|S_t|\leq n/4^t$ balls are live at the start of round~$t$, and each of them stores at most $b_t\leq2^t b_0$ entries when it is terminated in that round. By charging every ball to its termination round, we get that
\[
 X\leq\sum_{t\geq0}(n/4^t)\,2^t b_0=2nb_0=O(nN/r),
\]
where the last equality follows from $b_0=\lceil8N/r\rceil\leq8N/r+1\leq9N/r$, since $r\leq N$. 

The selection phase of round~$t$ uses the incidence lists of the at most $n/4^t$ balls in~$F$ and a max-heap that stores, for every center occurring in them, the total weight of the balls that contain it. When a ball is hit, we subtract its weight from all the centers it contains. The weights are integers of value $O(n^2)$. Each of the at most $(n/4^t)\,b_t$ incidences is inserted and removed at most once in that round, so the round costs $O((n/4^t)\,2^tb_0\log n)$ time. By the same geometric sum, all selection phases together cost $O((nN/r)\log n)$ time, as required.
\end{proof}

\fi

\subsection{Growing the balls}
\label{sec:batched-growth}

We describe a growth call (\Cref{alg:growth}) using a conceptual event queue; \Cref{sec:implementation} implements this queue without materializing an event for every entry-edge pair.

During one growth call, $R$ and~$b=b_t$ are fixed, while $\rho=U/2$ may decrease. For every stored entry of~$a$ at~$u$ and every oriented edge $(u,v)$ into an active vertex, the event queue contains the \emph{propagation event} $(a,u,v)$, with \emph{tail}~$u$ and \emph{head}~$v$, unless it has already been processed in this or an earlier call of the current level. Its key is
\begin{equation}\label{eq:propagation-key}
 K(a,u,v):=(D[a,u]+\ell(u,v),\operatorname{id}(a)).
\end{equation}
Keys are compared lexicographically, like the keys~$\theta_v$; the first coordinate of a key is its \emph{distance coordinate}. Events are processed in increasing order of key, and ties are broken by the identifiers of the head and the tail. \footnote{Formally, an event has priority $(K(a,u,v),\operatorname{id}(v),\operatorname{id}(u))$, compared lexicographically. }

Each entry-edge orientation is processed at most once in the whole level. Removing the events into a vertex that becomes inactive does not mark them processed, and if the vertex resumes in a later call, they return to the queue. Notice that the algorithm does not compute the hit thresholds. Instead, a vertex is hit-terminated when a selected center arrives at it.

\begin{algorithm2e}[htbp]
\caption{$\Grow(A,R,b,U,L)$}\label{alg:growth}
Mark each live ball active if $|L(v)|<b$ and quota-paused otherwise\;
$Q\leftarrow$ the unprocessed propagation events $(a,u,v)$ for every stored entry of~$a$ at~$u$ and every edge $(u,v)$ with $v$ active\;
\While{$Q\neq\emptyset$ and the minimum event of $Q$ has distance coordinate less than $U/2$}{
 Extract the minimum event $(a,u,v)$, with key $(s,\operatorname{id}(a))$, from $Q$ and mark it processed\;
 \uIf{$a\notin L(v)$}{
  Append $a$ to $L(v)$ with $D[a,v]=s$ and predecessor the entry of $a$ at $u$\;
  Add the events $(a,v,w)$ for all edges $(v,w)$ with $w$ active\;
  \uIf{$a\in R$}{mark $v$ hit-terminated and remove all events into $v$\;}
  \lElseIf{$|L(v)|=b$}{mark $v$ quota-paused and remove all events into $v$}
 }
 \ElseIf{$\{u,v\}$ is not a tree edge of $T_a$}{
  $U\leftarrow D[a,u]+\ell(u,v)+D[a,v]$, and record the cycle~$Z$ of~\eqref{eq:fundamental-cycle} as its certificate\;
  \textbf{break}\;
 }
}
Mark every still-active ball cutoff-terminated\;
\end{algorithm2e}

The predecessor pointers of the entries of a center~$a$ are never changed, and form a tree~$T_a$. When $U$ is updated, the cycle~$Z$ is extracted immediately from the predecessor chains as in \Cref{sec:preliminaries}, in $O(n)$ time, so that it remains available after the trees of the level are discarded. 

\begin{restatable}[Correctness of growth]{lemma}{growthlemma}\label{lem:batched-growth}
Throughout \Cref{alg:one-level}, (G1) holds, and for every center~$a$ the tree~$T_a$ is rooted at~$a$, spans the vertices that store~$a$, and its root paths are shortest paths.  Growth calls satisfy (G2) and (G3). Every update of~$U$ strictly decreases it and records a cycle of length at most the new value. When \Cref{alg:one-level} terminates,
\begin{equation}\label{eq:prefix-coverage}
 B_{A,R}^{<U/2}(v)\subseteq L(v)
 \qquad(v\in V).
\end{equation}
\end{restatable}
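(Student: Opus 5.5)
The proof is a single induction over the sequence of processed events of \Cref{alg:one-level}. Its invariant combines (G1) with the tree property: every $T_a$ is rooted at~$a$, spans the vertices storing~$a$, and has root paths that are shortest paths. The invariant holds after initialization, since every ball holds only its root entry, which is exact because $\theta_a(a)$ is minimal.

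\textbf{Inductive step for (G1) and the trees.} Consider an event $(a,u,v)$ with key $(s,\operatorname{id}(a))$ that appends~$a$ to $L(v)$. I will show that $s=\delta(a,v)$ and that every center~$c$ with $\theta_v(c)<\theta_v(a)$ is already in $L(v)$. Both claims follow from the ordering of the queue. Take a shortest path $P(a,v)$ and let $x$ be the first vertex on it that does not store~$a$, with predecessor~$y$ on the path (\Cref{fig:termination}(a)). Since $v$ is active, $x$ is active as well: balls grow only while active, and once a ball pauses, every later center has a larger key. Hence the event $(a,y,x)$ is pending, and its key is at most $(\delta(a,v),\operatorname{id}(a))$.

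To make this rigorous, I need monotonicity: processed keys never decrease within a call, and across calls every event removed from the queue returns. The argument uses positive lengths. A similar path argument applied to a center~$c$ with smaller $\theta_v$ shows that $c$ arrives at~$v$ first, unless $v$ was already paused or terminated, in which case $v$ does not receive~$a$. Ties are handled by the identifier coordinate of~$K$. Combined with the rule that only active balls grow, this gives (G1) and (G3): the loop ends with every live ball either quota-paused or marked cutoff-terminated. Property (G2) follows from exactness. If $a$ is appended and some $c\in R$ had $\theta_v(c)<\theta_v(a)$, then $c$ would already be in $L(v)$, and $v$ would have been hit-terminated. Therefore $\theta_v(a)\leq\sigma_R(v)$.

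\textbf{The certificate.} If $a\in L(v)$ already holds and $\{u,v\}$ is not a tree edge, the fundamental cycle~\eqref{eq:fundamental-cycle} has length at most $D[a,u]+\ell(u,v)+D[a,v]=s+D[a,v]\leq2s$, which is the new~$U$. The bound $D[a,v]\le s$ holds because $v$'s entry was created earlier, at a key no larger than~$s$. The loop guard gives $s<U/2$, so the new value is strictly smaller than~$U$.

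\textbf{Coverage~\eqref{eq:prefix-coverage}, the main obstacle.} Fix $v$ and a center $a\in B_{A,R}^{<U/2}(v)$ with the final value of~$U$. I split into cases by how $L(v)$ ends up.
\begin{itemize}
 \item If $L(v)$ was hit-terminated by some $c\in R$, then $c\in L(v)$ and $\theta_v(a)<\theta_v(c)$. Exactness (G1) gives $a\in L(v)$.
 \item If $L(v)$ was cutoff-terminated at a cutoff~$\rho'$, then the call stopped either because the queue minimum reached~$\rho'$, or at a cycle found at distance~$s$. In the second case the new $U/2$ is at most~$s$, and $U$ only decreases afterward. In both cases every event with distance coordinate below the final $U/2$ into an active vertex has been processed. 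Applying the first-missing-vertex argument to $P(a,v)$ then forces $a\in L(v)$, since a pending event of smaller key cannot exist.
\end{itemize}

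The delicate point is that vertices on $P(a,v)$ may have paused or terminated, which would block the propagation. I will resolve this by noting that, for $x$ on $P(a,v)$, subpath optimality gives $\theta_x(a)$ rank structure consistent with $v$: any $c\in R$ closer to~$x$ than~$a$ would be closer to~$v$ than~$a$ is, contradicting $\delta(a,v)<\delta(v,R)$. Similarly, a full ball at~$x$ that lacks~$a$ would have to be handled by a prefix argument. This is the step where I expect the most care. I will first try proving the stronger statement that entries keep propagating from terminated balls, and that only the head~$v$ needs to be active, which the algorithm indeed guarantees.
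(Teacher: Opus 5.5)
Your outline has the right ingredients: the first vertex of a shortest path that misses the center, the bound $2s$ on the new certificate, the fact that a cycle closed at distance~$s$ lowers the cutoff to at most~$s$, and a case split by termination type. But there is a genuine gap exactly at the step you flag, and it already affects your proof of (G1), not only the coverage claim.

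With only (G1) and the tree property as induction hypothesis, nothing prevents the first vertex~$x$ of $P(a,v)$ that misses~$a$ from being hit-terminated or cutoff-terminated, in this call or an earlier one. In that case the event $(a,y,x)$ is not in the queue, and the minimality argument collapses. Your justification (``balls grow only while active, and once a ball pauses, every later center has a larger key'') does not address termination. Even for pausing, it needs the center to have rank at most~$b$ at~$v$, so the argument must be run on the \emph{first} center missing from $L(v)$, not on an arbitrary appended center or bunch member. Propagating entries out of terminated balls does not help either: $x$ never received~$a$, so there is nothing at~$x$ to propagate. In short, exactness at active vertices depends on a coverage property at terminated vertices upstream, so the two cannot be proved in sequence as you plan.

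The missing idea is to put that coverage property into the induction. The paper proceeds as follows.
\begin{itemize}
\item It defines the required set $\Lambda(v)=\{a\in A:\theta_v(a)\le\sigma_R(v),\ \delta(a,v)<\rho\}$. This is a prefix of~$\theta_v$ that only shrinks as $R$ grows and $\rho$ drops.
\item It shows that for $x\in P(a,v)$ one has $\operatorname{rank}_x(a)\le\operatorname{rank}_v(a)$ and $a\in\Lambda(v)\Rightarrow a\in\Lambda(x)$. Your subpath remark is half of this.
\item Together with exactness, it maintains two invariants: (II)~every permanently terminated~$v$ satisfies $\Lambda(v)\subseteq L(v)$, and (III)~every live ball is disjoint from~$R$.
\end{itemize}
Then the first vertex missing a required center of rank at most~$b$ is neither terminated (by (II)) nor paused (by exactness and rank monotonicity). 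Hence a pending event of key at most $\theta_v(a)$ exists. This queue property yields exactness of each append and (G2). It is also exactly what shows that (II) survives each kind of termination: hit by growth, hit by selection, cutoff at queue minimum at least~$\rho$, and cutoff after a cycle. The last case uses $\rho_{\rm new}\le s$ together with the queue property just before the cycle event. Coverage is then (II) at the end, since $B_{A,R}^{<U/2}(v)\subseteq\Lambda(v)$.

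A smaller point: for a non-tree duplicate, $D[a,v]\le s$ does not follow from the entry at~$v$ having been created ``at a key no larger than~$s$''. Keys are not monotone across calls, because resumed vertices receive events of smaller key. The bound follows instead from exactness and the triangle inequality.
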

\ifC
We prove \Cref{lem:batched-growth} in \Cref{app:growth-proof}. Call a center~$a$ \emph{required} at~$v$ if $\theta_v(a)\leq\sigma_R(v)$ and $\delta(a,v)<\rho$. The proof maintains three invariants:
\begin{enumerate}
 \item[(I)] every ball is exact;
 \item[(II)] every permanently terminated ball contains all centers required at it;
 \item[(III)] every live ball is disjoint from~$R$.
\end{enumerate}
The key step is that no required entry is hidden: if an active vertex~$v$ misses a required center~$a$ of rank at most~$b$, then some pending event has key at most $\theta_v(a)$. Indeed, let $x$ be the first vertex of $P(a,v)$ that does not store~$a$, and let $u$ be its predecessor (\Cref{fig:termination}(a)). The center~$a$ is also required at~$x$, with rank at most~$b$, so by (I) and~(II) the vertex~$x$ is active, and the event $(a,u,x)$, of key $\theta_x(a)\leq\theta_v(a)$, is pending. Hence the minimum event always appends the correct next center with its exact distance, and hit-termination is safe. A cycle found at distance~$s$ lowers the cutoff to at most~$s$ (\Cref{fig:termination}(b)), so cutoff-termination is safe as well.
\else
\begin{proof}
Property (G3) holds by construction, since only active balls receive entries and the last step of \Cref{alg:growth} cutoff-terminates all balls that are still active. We prove the remaining claims by induction on the steps of \Cref{alg:one-level}. At each step, let
\[
 \Lambda(v):=\{a\in A:\theta_v(a)\leq\sigma_R(v),\ \delta(a,v)<\rho\}
\]
be the set of centers \emph{required} at~$v$, that is, the centers that lie below the cutoff and do not come after the nearest selected center at~$v$. Notice that, because of the non-strict inequality, the nearest selected center is required whenever it lies below the cutoff.

Adding centers to~$R$ lowers $\sigma_R(v)$, and $\rho$ never increases, so $\Lambda(v)$ can only shrink over time. Moreover, $\Lambda(v)$ is a prefix of the order~$\theta_v$. Consequently, if $L(v)$ is an exact ball that misses a member of~$\Lambda(v)$, then the first center missing from~$L(v)$ also belongs to~$\Lambda(v)$. We maintain the following three invariants:
\begin{enumerate}
 \item[(I)] Every $L(v)$ is an exact ball, and the predecessor path of each entry of~$a$ at~$v$ is a shortest path from~$a$ to~$v$.
 \item[(II)] Every permanently terminated vertex satisfies $\Lambda(v)\subseteq L(v)$.
 \item[(III)] Every live ball is disjoint from~$R$.
\end{enumerate}
The invariants hold initially, after the root entries are inserted. Since $\Lambda(v)$ only shrinks and terminated balls receive no entries, a terminated vertex continues to satisfy~(II) once it does. We first show a queue property.

\paragraph{No required entry is hidden.}
Let $x\in P(a,v)$. By the triangle inequality and since identifiers break ties in the same way at~$x$ and at~$v$, every center that strictly precedes~$a$ at~$x$ also strictly precedes~$a$ at~$v$. Consequently,
\begin{equation}\label{eq:rank-monotonicity}
 \operatorname{rank}_x(a)\leq\operatorname{rank}_v(a),
 \qquad a\in \Lambda(v)\ \Longrightarrow\ a\in \Lambda(x).
\end{equation}
For the second implication, notice that $\delta(a,x)\leq\delta(a,v)<\rho$, and that a selected center that strictly precedes~$a$ at~$x$ would strictly precede~$a$ at~$v$, which contradicts $\theta_v(a)\leq\sigma_R(v)$.

Suppose that $v$ is active and that $a\in \Lambda(v)\setminus L(v)$ has rank at most~$b$. Let $x$ be the first vertex of $P(a,v)$ that does not store~$a$. Since the root stores~$a$, the vertex~$u$ preceding~$x$ on $P(a,v)$ stores~$a$. By (II) and~\eqref{eq:rank-monotonicity}, $x$ is not permanently terminated. Moreover, $x$ is not quota-paused, since by (I) and since the rank of~$a$ at~$x$ is at most~$b$, a quota-paused ball at~$x$ would contain~$a$. Hence, $x$ is active.

Next, we show that the event $(a,u,x)$ is pending. The event has not been processed in this call or in any earlier call of the current level, since processing it would have left~$a$ stored at~$x$, either by inserting it or because it was already there, and entries are never deleted. By (I), its key is $\theta_x(a)\leq\theta_v(a)$. We have therefore proved the following queue property:
\begin{equation}\label{eq:queue-completeness}
 \min\operatorname{key}(\text{queue})\leq\theta_v(a)
 \quad\text{if }v\text{ is active},\
 a\in \Lambda(v)\setminus L(v),\
 \operatorname{rank}_v(a)\leq b.
\end{equation}
An empty queue has minimum $(+\infty,+\infty)$. The same argument applies at the start of a call, since the queue then contains every unprocessed event into an active vertex.

\paragraph{Accepting an entry, pausing, and hits.}
Consider a minimum propagation event $(a,u,v)$ with key $K=(s,\operatorname{id}(a))$, where $a\notin L(v)$ and $s<\rho$, and let $c$ be the first center missing from $L(v)$. By (I) and (III), no selected center strictly precedes~$c$ at~$v$, and therefore $\theta_v(c)\leq\sigma_R(v)$. By (I) and the path through~$u$, we also have $\theta_v(c)\leq\theta_v(a)\leq K$, and therefore $\delta(c,v)\leq s<\rho$. Thus, $c\in \Lambda(v)$ and $\operatorname{rank}_v(c)=|L(v)|+1\leq b$. If $\theta_v(c)<K$, then \eqref{eq:queue-completeness} contradicts the minimality of~$K$. Hence, $\theta_v(c)=\theta_v(a)=K$, which implies that $c=a$, that $s=\delta(a,v)$, and that $\theta_v(a)\leq\sigma_R(v)$. This proves (G2).

Appending the entry preserves (I), and attaching it to its already stored predecessor extends~$T_a$ by one vertex. If $a\notin R$, then $\theta_v(a)<\sigma_R(v)$, so (III) is preserved, and pausing at quota changes no invariant. If $a\in R$, then $\theta_v(a)=\sigma_R(v)$, so all the centers required at~$v$ are now stored, and hit-terminating~$v$ preserves (II) and~(III). Finally, processing a tree-edge duplicate changes no invariant.

\paragraph{Cutoff-termination and cycle updates.}
If the minimum event has distance coordinate at least~$\rho$, then any active vertex that misses a member of $\Lambda(v)$ again has a first missing center of rank at most~$b$. The distance of this center is less than~$\rho$, which contradicts~\eqref{eq:queue-completeness}. The same argument applies to an empty queue. Thus, all active vertices can be cutoff-terminated.

Next, consider a non-tree duplicate $(a,u,v)$ with distance coordinate~$s$, and let $w=\operatorname{LCA}_{T_a}(u,v)$. By (I) and the triangle inequality, $D[a,v]\leq D[a,u]+\ell(u,v)=s$. By~\eqref{eq:fundamental-cycle}, the extracted cycle satisfies
\begin{align*}
 \ell(Z)
 &=D[a,u]+\ell(u,v)+D[a,v]-2D[a,w]\\
 &\leq D[a,u]+\ell(u,v)+D[a,v]=U_{\rm new}
 \leq2s<U_{\rm old}.
\end{align*}
Therefore, the new cutoff satisfies $\rho_{\rm new}\leq s$. Immediately before this minimum event is processed, \eqref{eq:queue-completeness} excludes any missing required center of distance less than~$\rho_{\rm new}$ at an active vertex. Hence, cutoff-termination after the update also preserves (II).

\paragraph{Selection and restarting.}
A ball that is terminated by selection contains the newly selected center~$c$. By (III), it contains no earlier selected center, so $c$ is the nearest selected center at~$v$. By (I), the ball contains every center up to~$c$, and therefore it satisfies (II). Since selection terminates every live ball that contains~$c$, (III) is preserved. Increasing the quota changes no stored information. At the next call, the queue consists of the events that remain unprocessed, including the events into quota-paused vertices that resume, and the argument above reestablishes~\eqref{eq:queue-completeness}. Notice that an event that was removed earlier, because its head became quota-paused, may now be a non-tree duplicate. In this case it triggers a new cycle update, which is covered by the previous case and again strictly decreases~$U$. Thus, (I)--(III) hold across all calls.

This completes the induction. At termination, every vertex is permanently terminated, so (II) holds for every vertex. Since $B_{A,R}^{<U/2}(v)\subseteq\Lambda(v)$, this proves~\eqref{eq:prefix-coverage}, as required.
\end{proof}

\fi

\subsection{Implementation and running time}
\label{sec:implementation}

\begin{figure}[tb]
\centering
\begin{tikzpicture}[
  x=1cm,y=1cm,font=\small,
  entry/.style={draw=blue!65!black,fill=blue!12,rounded corners=1pt,minimum width=1.5cm,minimum height=.42cm,font=\scriptsize},
  absent/.style={entry,dashed,fill=white,draw=black!45,text=black!50},
  bucket/.style={draw=black!60,rounded corners=2pt,minimum width=3.1cm,minimum height=.42cm},
  edge dot/.style={circle,fill=black!45,inner sep=0pt,minimum size=3.5pt},
  min dot/.style={circle,fill=orange!80!black,inner sep=0pt,minimum size=5pt},
  link/.style={draw=black!35},
  rep/.style={->,>=stealth,draw=orange!80!black,line width=.8pt},
  annotation/.style={align=center,font=\scriptsize}
]
\node[annotation] at (0,.55) {stored ball $L(u)$};
\node[annotation] at (3.6,.55) {buckets of edges into active heads};
\node[annotation] at (7.4,.55) {global queue};
\foreach \j/\y in {1/0,2/-.6,3/-1.2}{
  \node[entry] (e\j) at (0,\y) {entry $\j$};
  \node[bucket] (h\j) at (3.6,\y) {};
  \node[annotation,anchor=west] at (1.2,\y) {$H_{u,\j}$};
  \draw[link] (e\j.east)--(1.15,\y);
}
\node[absent] (e4) at (0,-1.8) {entry 4 absent};
\node[bucket,dashed,draw=black!45] (h4) at (3.6,-1.8) {};
\node[annotation,anchor=west] at (1.2,-1.8) {$H_{u,4}$};
\node[annotation,text=black!55] at (3.6,-1.8) {waiting};
\foreach \y/\cnt in {0/4,-.6/2,-1.2/3}{
  \node[min dot] at (2.35,\y) {};
  \foreach \i in {2,...,\cnt} \node[edge dot] at (2.35+.55*\i-.55,\y) {};
}
\node[draw=black!60,rounded corners=2pt,minimum width=1.7cm,minimum height=1.7cm] (q) at (7.4,-.6) {};
\foreach \y in {0,-.6,-1.2} \draw[rep] (5.2,\y) .. controls (5.9,\y) .. (6.55,-.6+.35*\y+.2);
\node[annotation] at (7.4,-1.9) {one representative\\per bucket with an entry};
\begin{scope}[yshift=-2.95cm]
  \node[annotation] (s1) at (0,0) {head active\\$j(e)=j$};
  \node[annotation] (s2) at (3.6,0) {head paused: $e$ removed,\\$j(e)=j$ kept};
  \node[annotation] (s3) at (7.3,0) {head resumes: $e$ reinserted\\into $H_{u,j(e)}$};
  \draw[->,>=stealth,black!60] (s1)--(s2);
  \draw[->,>=stealth,black!60] (s2)--(s3);
  \node[annotation] at (3.6,-.75) {only processing the event of entry $j$ along~$e$ sets $j(e)\leftarrow j+1$};
\end{scope}
\end{tikzpicture}
\caption{Illustration of the buckets $H_{u,j}$ of~$u$. Only the minimum edge of a bucket is represented, and only if entry~$j$ of $L(u)$ exists.}
\label{fig:buckets}
\end{figure}

\ifC
The implementation of \Cref{alg:growth}, detailed in \Cref{app:implementation}, never creates an event for every pair of a stored entry and an incident edge. For every oriented edge $(u,v)$, it keeps the index of the first entry of $L(u)$ not yet processed along it. This index is never reset during the construction of a level, so every entry-edge event is processed at most once per level. Edges with the same tail~$u$ and the same index~$j$ form a bucket $H_{u,j}$ ordered by $(\ell(u,v),\operatorname{id}(v))$ (\Cref{fig:buckets}). Only the minimum edge of a bucket whose entry~$j$ exists is represented in the global queue, and appending entry~$j$ to $L(u)$ activates the single bucket waiting for it. A growth call starts by inserting the incoming edges of the active vertices into their buckets, and removes the incoming edges of a vertex when it becomes inactive.

Let $X$ be the total number of stored entries. By the degree bound of \Cref{lem:selection}, over all growth calls of a level, inserting and removing incoming edges takes $O(m)$ heap operations. Moreover, at most $X$ processed events append an entry, at most $2X$ are tree-edge duplicates, and at most one event per call closes a cycle. Hence, a whole level takes $O((m+nN/r)\log n)$ time, which proves \Cref{lem:one-level} (\Cref{app:implementation}).
\else
Constructing all events of \Cref{alg:growth} explicitly would cost $\sum_u\deg(u)|L(u)|$ time. We now describe an implementation that avoids this cost. Write $L(u)=(a_1(u),a_2(u),\ldots)$. For every oriented edge $e=(u,v)$, we maintain the index $j(e)$ of the first entry of $L(u)$ that has not been processed along~$e$ yet. The index is set to~$1$ when the construction of the level starts, and it is never reset. It advances only when the event of entry $j(e)$ along~$e$ is processed. Since the events along an edge are ordered by~$j$, only the first unprocessed event of the edge needs to be represented.

We group the edges with the same tail and the same index into buckets
\[
 H_{u,j}:=\{e=(u,v):v\text{ is active},\ j(e)=j\}.
\]
We maintain each bucket as an indexed min-heap ordered by $(\ell(u,v),\operatorname{id}(v))$. If entry~$j$ of~$L(u)$ exists, we represent only the minimum edge of $H_{u,j}$ in the global queue, with the priority of its propagation event. Otherwise, the bucket waits for entry~$j$. Since later entries on an edge have no smaller keys, these representatives give exactly the minimum event of \Cref{alg:growth}. A new entry at~$u$ activates at most one waiting bucket, the one for its index, without scanning the adjacency list of~$u$.

After processing $e=(u,v)$ from $H_{u,j}$, we remove it from that bucket and refresh or delete the old representative. If $v$ remains active, we insert $e$ into $H_{u,j+1}$. If entry~$j+1$ exists, we insert or decrease the representative of that bucket as needed, and otherwise the bucket waits for that entry. Thus, processing an edge updates at most two bucket representatives (see \Cref{fig:buckets}).

A growth call starts by scanning the adjacency lists of the active vertices and inserting their incoming edges into the buckets given by their saved indices. When a head becomes inactive, we remove its incoming edges using heap handles. We keep an explicit list of the live vertices. Bucket records are created when they are first touched, in a balanced search tree keyed by $(u,j)$. Hence, no table over all pairs $(u,j)$ needs to be cleared, and all dictionaries are deterministic. The entries of each ball are stored in an append-only array, so entry~$j$ of $L(u)$ is accessed in constant time, and a balanced search tree supports membership queries and entry lookups by center. Finally, in the comparison-addition model, the cutoff test $s<U/2$ is implemented as $s+s<U$.

\begin{proof}[Proof of \Cref{lem:one-level}]
By \Cref{lem:batched-growth}, growth calls satisfy (G1)--(G3). Therefore, \Cref{lem:selection} gives $|R|\leq r$, $O(\log n)$ rounds, $X=O(nN/r)$, and the bounds on~$S_t$. By \Cref{lem:batched-growth}, we get that $\sum_v|B_{A,R}^{<U_{\rm out}/2}(v)|\leq X=O(nN/r)$. \Cref{lem:batched-growth} also shows that $U_{\rm out}\leq U_{\rm in}$ and that every finite value comes with a cycle.

Next, we bound the work over all growth calls of the level. The active vertices of round~$t$ belong to~$S_t$, so inserting and removing incoming edges takes $O(\sum_t\sum_{v\in S_t}\deg(v))=O(m)$ heap operations. Similarly, maintaining the list of live vertices takes $O(\sum_t|S_t|)=O(n)$ time.

Every processed event either inserts an entry, or is a tree-edge duplicate, or closes a cycle. At most $X$ events insert entries. The center trees have fewer than~$X$ edges in total, each edge has two orientations, and each orientation is processed at most once per level for its center. Therefore, at most $2X$ events are tree-edge duplicates. At most one event per call closes a cycle and ends the call, and the subsequent cycle extraction costs $O(n)$ time. Each processed event and each new entry changes at most two bucket representatives, and all heaps and dictionaries take $O(\log n)$ time per operation. Hence, all growth calls together take $O((m+X)\log n+n\log n)$ time, and selection takes $O((nN/r)\log n)$ time. Since $X=O(nN/r)$ and $N/r\geq1$, we get the claimed running time of $O((m+nN/r)\log n)$, as required.
\end{proof}

\fi

\subsection{Constructing the hierarchy and completing the algorithm}
\label{sec:deterministic-completion}

\Cref{alg:main} constructs the hierarchy level by level with \Cref{lem:one-level}, using a single global best certificate, and then runs $\BoundedCycle$, which performs new cluster searches on the completed hierarchy; the balls and trees used to choose the levels are discarded. For $\kappa=1$ no level is constructed, $U_0=+\infty$, and $\BoundedCycle$ is the unbounded algorithm on the hierarchy $A_0=V$, $A_1=\emptyset$.

\begin{algorithm2e}[htbp]
\caption{Deterministic girth approximation}\label{alg:main}
\KwIn{Graph $G=(V,E,\ell)$ and an integer $\kappa\geq1$}
$U\leftarrow+\infty$ with no certificate; $q\leftarrow n^{1/\kappa}$; $A_0\leftarrow V$; $A_j\leftarrow\emptyset$ for $1\leq j\leq \kappa$\;
\For{$i=0,\ldots,\kappa-2$}{
 $N\leftarrow|A_i|$\;
 \lIf{$N<q$}{\textbf{break}}
 Apply \Cref{alg:one-level} to $A=A_i$ with $r=\lfloor N/q\rfloor$ and the current certificate; set $A_{i+1}\leftarrow R$ and update $U$\;
 \lIf{$A_{i+1}=\emptyset$}{\textbf{break}}
}
$U_0\leftarrow U$; $\rho_0\leftarrow U_0/2$\;
Run $\BoundedCycle$ on $A_0\supseteq\cdots\supseteq A_\kappa$ with the certificate of value $U_0$\;
\KwRet{the cycle of the best certificate}
\end{algorithm2e}

\begin{lemma}[Hierarchy]\label{lem:deterministic-hierarchy}
The construction of the hierarchy in \Cref{alg:main} takes $O((m+nq)\kappa\log n)$ time. The hierarchy satisfies
\begin{equation}\label{eq:hierarchy-bounds}
 |A_i|\leq n/q^i\quad(0\leq i<\kappa),
 \qquad
 \sum_{i=0}^{\kappa-1}\sum_{v\in V}|B_{A_i,A_{i+1}}^{<\rho_0}(v)|=O(\kappa nq).
\end{equation}
\end{lemma}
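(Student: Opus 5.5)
We apply \Cref{lem:one-level} once per constructed level and sum the resulting bounds. Two facts drive the argument. The certificate value~$U$ only decreases across levels, so $\rho_0=U_0/2$ is at most the cutoff $U_{\rm out}/2$ of every level. The truncated bunches are monotone in the cutoff, since $B_{A,A'}^{<\rho}(v)\subseteq B_{A,A'}^{<\rho'}(v)$ whenever $\rho\leq\rho'$.

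\emph{Sizes of the levels.} We induct on~$i$. Clearly $|A_0|=n$. If level $i+1$ is built from $A_i$ with $N=|A_i|\geq q$, then $r=\lfloor N/q\rfloor$ satisfies $1\leq r\leq N$, so \Cref{lem:one-level} applies and gives $|A_{i+1}|\leq r\leq N/q\leq n/q^{i+1}$. If the loop breaks, every later level is empty, and the bound holds trivially. This gives the first part of~\eqref{eq:hierarchy-bounds}. For the running time of level~$i$, note that $N/r\leq N/(N/(2q))=2q$, because $\lfloor x\rfloor\geq x/2$ for $x\geq1$. Hence $nN/r=O(nq)$, and each of the at most $\kappa-1$ calls takes $O((m+nq)\log n)$ time. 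The bookkeeping in \Cref{alg:main} is negligible.

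\emph{Bunch sizes.} We split the levels $i<\kappa$ into three kinds.
\begin{enumerate}
\item[(a)] Levels constructed by \Cref{lem:one-level}. Here $A_{i+1}=R$, and monotonicity in $\rho$ gives $\sum_v|B^{<\rho_0}_{A_i,A_{i+1}}(v)|\leq\sum_v|B^{<U_{\rm out}/2}_{A_i,R}(v)|=O(nN/r)=O(nq)$.
\item[(b)] The level $A_i$ at which the loop stops, or the last level $i=\kappa-1$. Then $A_{i+1}=\emptyset$, so each bunch $B_{A_i,\emptyset}(v)$ is a subset of~$A_i$. If the loop broke because $|A_i|<q$, the sum is at most $n|A_i|<nq$. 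If $i=\kappa-1$ is reached through the loop, then $|A_{\kappa-1}|\leq n/q^{\kappa-1}=q$, so the sum is at most $nq$.
\item[(c)] A level where $A_{i+1}=\emptyset$ was returned by the one-level construction. This is already covered by~(a), since there $R=\emptyset$ and the bound of \Cref{lem:one-level} still applies.
\end{enumerate}
All levels after an empty level are empty and contribute nothing. Summing over the at most $\kappa$ levels gives $O(\kappa nq)$.

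The main care is in the boundary cases: the final level $\kappa-1$ is never processed by \Cref{lem:one-level}, and the loop may break early. In both cases the bound relies on $|A_i|\leq q$ and on the trivial inclusion of each bunch in~$A_i$. The value $q=n^{1/\kappa}$ may be non-integral, which is harmless. We also need to check that ``apply \Cref{alg:one-level}'' is valid, i.e.\ that $r\geq1$, which holds because $N\geq q$.
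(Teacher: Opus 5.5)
Your proof is correct and follows the paper's argument. You apply \Cref{lem:one-level} once per constructed level with $N/r\leq 2q$, use monotonicity of the truncated bunches in the cutoff (since $\rho_0\leq U_{\rm out}/2$), and use the trivial bound $n|A_i|\leq nq$ both for the level where the loop breaks with $N<q$ and for level $\kappa-1$, which also covers $\kappa=1$ where $q=n$. Your explicit induction for $|A_i|\leq n/q^i$ and your check that $1\leq r\leq N$ only spell out details the paper leaves implicit.
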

\begin{proof}
At each constructed level, $|A_{i+1}|\leq\lfloor|A_i|/q\rfloor$. If $N\geq q$, then $r\geq1$ and $N/r\leq2q$, so \Cref{lem:one-level} bounds the level-$i$ truncated bunches by $O(nq)$ at that level's output value $U_{\rm out}$, in time $O((m+nq)\log n)$. Since $U$ only decreases afterwards, $\rho_0\leq U_{\rm out}/2$, so these bunches only shrink. If $N<q$, or if $i=\kappa-1$ (then $|A_{\kappa-1}|\leq q$, also for $\kappa=1$ where $q=n$), the level-$i$ truncated bunches have total size at most $n|A_i|\leq nq$. Levels after an empty level contribute nothing. Summing over the at most $\kappa$ levels proves the claim.
\end{proof}

\begin{proof}[Proof of \Cref{thm:deterministic-girth}]
\Cref{lem:truncation-transfer} gives the approximation guarantee. The hierarchy costs $O((m+nq)\kappa\log n)$ time by \Cref{lem:deterministic-hierarchy}.

In $\BoundedCycle$, the preprocessing computes $p_i(v)$ and $\delta(v,A_i)$ for every level by a single run of Dijkstra's algorithm from a dummy vertex joined to all of~$A_i$ by edges of length~$0$, and builds the edge-access structure; this takes $O((m+n)\kappa\log n)$ time. By \Cref{lem:bounded-search}, the search from $a\in A_i\setminus A_{i+1}$ takes $O((1+|\Cl_{A_i,A_{i+1}}^{<\rho_0}(a)|)\log n)$ time. Hence, by double counting and \Cref{lem:deterministic-hierarchy}, all bounded cluster searches together take $O(\kappa nq\log n)$ time. The edge-and-pivot checks cost $O(\kappa m\log n)$ time with deterministic dictionaries. For the certificates found by the final searches, we keep only a description of the best one and extract its cycle at the end in $O(n)$ time. Substituting $q=n^{1/\kappa}$ proves the running-time bound.
\end{proof}

\subsection{Stretch \texorpdfstring{$2k/3$}{2k/3} for every \texorpdfstring{$k$}{k}}
\label{sec:odd}

For even~$k$, \Cref{thm:deterministic-girth} with parameter $\kappa=k/2$ returns a cycle of length at most $2kg/3$ in $O((m+n^{1+2/k})k\log n)$ time. In the rest of this section, $k=2\alpha+1\geq3$ is odd and $q=n^{2/k}$. 

\paragraph{The randomized algorithm.}
The algorithm of~\cite[Algorithm~1]{kadria2026tighter} uses a hierarchy $V=A_0\supseteq A_1\supseteq\cdots\supseteq A_\alpha\supseteq A_{\alpha+1}=\emptyset$, where $A_1$ is sampled from~$V$ with probability $n^{-1/k}$ and $A_{i+1}$ is sampled from~$A_i$ with probability $n^{-2/k}$ for $1\leq i<\alpha$. Phase~1 runs, for every $u\in V$ with $B_0(u)=\{v:\delta(u,v)<\delta(u,A_1)\}\neq\emptyset$, a Dijkstra-based procedure from~$u$ in $G[B_0(u)]$, in $O(|B_0(u)|^2\log n)$ time; if $G[B_0(u)]$ contains a shortest cycle of~$G$ through~$u$, it returns a cycle of length~$g$~\cite[Section~2]{kadria2026tighter}. Phase~2 runs $\ClusterOrCycle(a)$ for every $a\in A_1$, and Phase~3 performs the edge-and-pivot checks. Randomness is used for hashing and to bound $\sum_u|B_0(u)|^2$ and the cluster sizes of levels $1,\ldots,\alpha$ in expectation.

\begin{lemma}[First level]\label{lem:first-level}
A set $A_1\subseteq V$ with $|A_1|\leq n^{1-1/k}$, the sets $B_0(u)$ for all $u\in V$, and Phase~1 on them can be computed deterministically in $O((m+n^{1+2/k})\log n)$ time.
\end{lemma}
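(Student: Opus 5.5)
We will choose $A_1$ with Yan's greedy schedule, that is, \Cref{alg:one-level} run on $A=V$ with budget $r=\lfloor n^{1-1/k}\rfloor$. Here no certificate is used, so we set $U=+\infty$ throughout, and $\Grow$ never breaks on a cycle. With $N=n$ and $r\approx n^{1-1/k}$ we get $b_0=\Theta(n^{1/k})$. By \Cref{lem:selection}, $|A_1|\leq r\leq n^{1-1/k}$. Since no cutoff applies, every terminated ball is hit-terminated and, by \eqref{eq:prefix-coverage} with $U=+\infty$, satisfies $B_0(v)\subseteq L(v)$. The ball $L(v)$ is exact by (G1), and it ends with its first selected center. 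Hence $B_0(v)$ is exactly $L(v)$ with its last center removed, and we read it off together with the exact distances.

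Two points of the earlier analysis have to be rechecked because growth now cannot stop at a cycle. First, the cost bound in the proof of \Cref{lem:one-level} charged non-tree duplicate events to the call that they end. Without cycle detection, such an event is simply skipped, and the number of processed events is no longer bounded by $O(X)$ plus one per call. Second, $\sum_u|B_0(u)|^2$ is what Phase~1 costs. For the second point, round~$t$ terminates at most $n/4^t$ balls of size at most $2^tb_0$. This gives
\[
\sum_u|L(u)|^2\leq\sum_t (n/4^t)\,4^tb_0^2=O(nb_0^2\log n).
\]
That bound is off by a $\log n$ factor. The clean fix is to run the selection with the $p=2$ cost of Yan's schedule: weight each ball by its squared size, or equivalently stop when the unhit fraction drops to $1/8$ and let the quota grow by a factor $\sqrt2$, or by $2$ with the unhit fraction $1/16$. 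With that choice the sum becomes geometric, so $\sum_u|B_0(u)|^2=O(nb_0^2)=O(n^{1+2/k})$. We combine this with the degree weighting of \Cref{alg:one-level}, so that the total degree of the live balls still decreases geometrically.

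The main obstacle is the first point, bounding edge work without cycle-based early stopping. The plan is to avoid the bucketed propagation entirely for level~$0$. Since $B_0(u)$ is computed in full and $|B_0(u)|^2$ is affordable, we grow each ball $L(v)$ by a local Dijkstra from~$v$ that is truncated after $b_t$ vertices. In the unweighted-by-center view, the centers are all of~$V$, and the ball of~$v$ is its $b_t$ nearest vertices. We use Spira's lazy variant with adjacency lists sorted by length, exactly as in \Cref{lem:cluster-or-cycle}. A search that settles $x$ vertices then extracts only $O(x)$ orientations per settled vertex in order of length, so in the worst case it costs $O(x^2\log n)$ and never more than the edges among the first $x$ vertices plus $x$. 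Restarting with a doubled quota is charged geometrically. The total growth cost is therefore $O(\sum_u|L(u)|^2\log n)=O(n^{1+2/k}\log n)$, plus the $O(m\log n)$ sort.

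Selection costs $O(\sum_t (n/4^t)2^tb_0\log n)$, which is $O(n^{1+1/k}\log n)$. Phase~1 on each $G[B_0(u)]$ costs $O(|B_0(u)|^2\log n)$, so it is bounded by the same squared sum. Finally, we check that $B_0(u)$ coincides with $\{v:\delta(u,v)<\delta(u,A_1)\}$ under the paper's identifier tie-breaking; strictness together with exactness of the ball gives this.
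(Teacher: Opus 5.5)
Once you drop the batched growth calls, your route is the paper's: Yan's schedule with the $p=2$ stopping rule on $A=V$ with $r=\lfloor n^{1-1/k}\rfloor$, each ball of nearest vertices regrown by a truncated local Dijkstra on sorted adjacency lists, restart costs summed geometrically, and $\sum_u|L(u)|^2=O(nb_0^2)=O(n^{1+2/k})$. Your $O(b^2\log n)$ per ball via Spira's rule, instead of the cited $O(b^2)$, is harmless. The degree weighting you keep is unnecessary, since a local search's cost does not depend on degrees. If you keep it, the weight threshold must be $1/16$ so that the number of unhit balls still drops eightfold. Two steps, however, are wrong or missing as written.

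First, $B_0(v)$ is not $L(v)$ with its last center removed. The nearest vertex of $A_1$ can sit anywhere in $L(v)$. A ball hit during a selection phase was already full with $b_t$ vertices, and the chosen center may be any of them; a later selection may even pick a nearer one. Moreover, vertices at distance exactly $\delta(v,A_1)$ with smaller identifier than the nearest selected center precede it in the exact ball, but the strict inequality excludes them from $B_0(v)$. Your closing remark about strictness points to this failure rather than resolving it. Extract instead $B_0(u)=\{v\in L(u):\delta(u,v)<\min\{\delta(u,a):a\in A_1\cap L(u)\}\}$, using the final $A_1$. This is correct because exactness places the nearest vertex of $A_1$ in $L(u)$, and it takes $O(|L(u)|)$ time, as in the paper.

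Second, Phase~1 runs on $G[B_0(u)]$, and you never say how to build this subgraph deterministically. The truncated search does not extract every edge between ball vertices: a long edge joining two vertices near $u$ is never reached. Scanning full adjacency lists costs degrees, and the randomized algorithm uses hashing exactly here. The paper stores the edge set in a balanced search tree keyed by endpoints and looks up every pair of vertices of $B_0(u)$. This costs $O(|B_0(u)|^2\log n)$, which fits the budget.
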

\begin{proof}
First sort the adjacency lists by length, breaking ties by neighbor identifier, and build a balanced search tree of the edges indexed by their endpoints. This one-time preprocessing costs $O(m\log n)$ and accounts for the $m$ term.

Apply Yan's algorithm~\cite[Algorithm~1 and Theorem~4]{yan2026lossless} with $p=2$ and $r=\lfloor n^{1-1/k}\rfloor$, growing exact balls of nearest vertices.  On sorted adjacency lists, a ball of $b$ nearest vertices and their distances can be computed in $O(b^2)$ time~\cite[Section~2]{DBLP:conf/soda/ChechikLRSTW14}. Since quotas double while the number of unhit balls decreases by a factor of eight, the total time for these computations, including repeated growth requests, is $O(n(n/r)^2)=O(n^{1+2/k})$. Yan's theorem gives $|A_1|\leq r$ and final balls $L(u)$ with $\sum_u|L(u)|^2=O(n^{1+2/k})$, each containing a vertex of~$A_1$.

Thus $B_0(u)\subseteq L(u)$. We extract $B_0(u)$ in $O(|L(u)|)$ time as the set of vertices $v\in L(u)$ with $\delta(u,v)<\min\{\delta(u,a):a\in A_1\cap L(u)\}$. We obtain $G[B_0(u)]$ by looking up every pair of its vertices in the edge search tree, then run Phase~1 on each nonempty $B_0(u)$. Constructing these subgraphs and running Phase~1 costs $O(\sum_u|B_0(u)|^2\log n)=O(n^{1+2/k}\log n)$ in total. Adding preprocessing proves the claim.
\end{proof}

\paragraph{The other levels.}
For $i=1,\ldots,\alpha-1$, we construct $A_{i+1}$ from~$A_i$ with \Cref{lem:one-level} and $r=\lfloor|A_i|/q\rfloor$, as in \Cref{alg:main}, sharing the best certificate and stopping when $|A_i|<q$ or $A_{i+1}=\emptyset$. Then $|A_\alpha|\leq n^{1-1/k}/q^{\alpha-1}=q$. Let $U_0$ be the final certificate value and $\rho_0=U_0/2$. We then run Phases~2 and~3 with every cluster search replaced by $\ClusterOrCycleBounded(a,\rho_0)$ and all hash tables replaced by balanced search trees. 

Fix a shortest cycle~$C$. Let $h_i(C)=\min_{w\in A_i}\delta(w,C)$, and, if $M(C)>g/3$, let $h'_i(C)=\min\{\delta(u,A_i),\delta(u',A_i)\}$ for a longest edge $\{u,u'\}$ of~$C$. We write $\eta_i$ for $h_i(C)$ if $M(C)\leq g/3$ and for $h'_i(C)$ otherwise.

\begin{restatable}[Odd~$k$]{lemma}{oddtransferlemma}\label{lem:odd-transfer}
Let $k=2\alpha+1\geq3$, fix a hierarchy $V=A_0\supseteq\cdots\supseteq A_{\alpha+1}=\emptyset$, and suppose that Phase~1 is run on the exact sets $B_0(u)$ and that every cluster search for a center of level $i\geq1$ is $\Delta$-good for every $\Delta\leq kg/3$. Then the certificates found by the three phases include one of value at most $2kg/3$.
\end{restatable}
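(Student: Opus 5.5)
We follow the pattern of \Cref{lem:krsvz-analysis}: the proof of~\cite{kadria2026tighter} (its analysis of Algorithm~1, Phases~1--3) applies verbatim, provided that every invocation of the cluster-search guarantee (\Cref{lem:cluster-or-cycle}, or~\cite[Lemma~5]{kadria2026improvedsicomp}) is replaced by $\Delta$-goodness, that is, by (C1) or (C2) for a suitable radius~$\Delta$. The work is to check two things: that every radius the analysis actually uses is at most $kg/3$, and that Phase~1 is used only through its exact guarantee on $B_0(u)$.

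\textbf{Base case (level~$0$).} Level~$0$ has no cluster searches; it is handled by Phase~1. Let $u$ be a vertex of~$C$ closest to~$C$'s relevant structure, so $\eta_0=0$. Either $G[B_0(u)]$ contains~$C$ through~$u$, and Phase~1 reports a cycle of length $g\leq 2kg/3$. Or some vertex of~$C$ (in the case $M(C)\le g/3$) or an endpoint of the longest edge (in the case $M(C)>g/3$) lies at distance at most $\Delta'_0$ from~$A_1$, where $\Delta'_0$ is the first-level radius of~\cite{kadria2026tighter} (of order $g/3$, respectively $(g-M(C))/2$). This is exactly their Phase~1 lemma, which only uses that $B_0(u)$ is computed exactly; that is our hypothesis.

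\textbf{Inductive step (levels $1\leq i\leq\alpha$).} We induct on~$i$ with the claim: either a certificate of value at most $2\Delta'_i$ has been found, or $\eta_{i+1}\leq\Delta'_i$. The radii shift by one half-step relative to~\eqref{eq:analysis-radii}, giving $\Delta'_i=(2i+1)g/3$ or $\Delta'_i=(i+\tfrac12)(g-M(C))$, so that $\Delta'_i\leq(2i+1)g/3$. At level~$i$, take the center $a=p_i(\cdot)$ near~$C$, at distance at most $\Delta'_{i-1}$. The relevant edges of~$C$ then have cluster orientations with key at most $\Delta'_i$ at~$a$, unless they escape the cluster; escaping gives $\eta_{i+1}\leq\Delta'_i$, by~\cite[Lemmas~6 and~8]{kadria2026improvedsicomp} as adapted in~\cite{kadria2026tighter}.

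We then apply $\Delta'_i$-goodness, splitting on $M(C)$:
\begin{itemize}
 \item If $M(C)\le g/3$, then (C2) would put all of~$C$ into~$T_a$, which is impossible. Hence (C1) holds and gives a certificate of value at most $2\Delta'_i$.
 \item If $M(C)>g/3$, then (C2) puts $C\setminus\{u,u'\}$ into~$T_a$ with exact distances. The edge-and-pivot check of Phase~3 then gives a certificate of value at most $2\Delta'_{i-1}+g$.
\end{itemize}
In both cases the value is at most $2(2i+1)g/3$.

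\textbf{Termination and the main obstacle.} Since $A_{\alpha+1}=\emptyset$, the second alternative fails at $i=\alpha$, so some certificate has value at most $2(2\alpha+1)g/3=2kg/3$. The largest radius used is $\Delta'_\alpha\le(2\alpha+1)g/3=kg/3$, which is why the hypothesis only needs $\Delta\leq kg/3$.

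The main obstacle is bookkeeping. We must verify the exact odd-$k$ radii of~\cite{kadria2026tighter}, especially in the case $M(C)>g/3$, where the half-step at level~$0$ interacts with $g-M(C)$. We must confirm that each is at most $kg/3$. And we must confirm that~\cite{kadria2026tighter} never uses cluster information beyond these radii; for instance, the edge-and-pivot checks must only read distances $D[a,x]$ with $\delta(a,x)\le\Delta'_i$, which is what (C2) makes exact.
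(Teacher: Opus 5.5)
Your targets for levels $i\geq1$ are correct ($\eta_{i+1}\leq(2i+1)g/3$, certificates of value at most $2(2i+1)g/3$, radii at most $kg/3$). The way you reach level~$1$, however, fails.

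Your base case asserts that Phase~1 alone yields either a cycle of length~$g$ or $\eta_1\leq\Delta'_0$ with $\Delta'_0=g/3$ (respectively $(g-M(C))/2$). Phase~1 gives only this: for $u\in C$ with $C\not\subseteq G[B_0(u)]$, we have $\delta(u,A_1)\leq\max_{v\in C}\delta(u,v)\leq g/2$. This is tight. Let $C$ be a $6$-cycle with unit lengths, add a vertex~$w$ joined to every vertex of~$C$ by an edge of length~$3$, and take $A_1=\{w\}$, $A_2=\emptyset$ (so $k=3$). Then $g=6$, $M(C)\leq g/3$, every $G[B_0(u)]$ is acyclic, and $\eta_1=3=g/2$. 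Feeding $\eta_1=g/2$ into your generic step gives radius $7g/6$ at level~$1$ and a certificate bound of $7g/3>2g=2kg/3$. Propagated through the levels, it gives $(2k+1)g/3$ instead of $2kg/3$.

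There is a second problem. The first-level statement of~\cite{kadria2026tighter} is Lemma~10, and it is not a pure Phase~1 fact that ``only uses exactness of $B_0(u)$''. It combines Phase~1 with a level-$1$ cluster search and invokes~\cite[Lemma~5]{kadria2026improvedsicomp}. It is therefore exactly the piece that must be re-proved under $\Delta$-goodness, not something you can import.

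The missing idea is to treat Phase~1 and one level-$1$ search jointly. This gives either a certificate of value at most $2g$ or $\eta_2\leq g$, with no intermediate bound on $\eta_1$. Fix $u\in C$ and the farthest edge $\{x,y\}$ of~$C$ from~$u$, with $\delta(u,x)\geq\delta(u,y)$.

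\begin{itemize}
\item If $C\subseteq G[B_0(u)]$, Phase~1 returns a cycle of length~$g$.
\item Otherwise $\delta(u,A_1)\leq\delta(u,x)=\ell(P(u,x))$. Take $a=p_1(u)$, orient $C$ away from~$u$, and orient $\{x,y\}$ as $(y,x)$. Every orientation then has key at most $\Delta=\delta(u,A_1)+\ell(P_2(u,x))\leq\ell(P(u,x))+\ell(P_2(u,x))=g$. A large $\delta(u,A_1)$ is paid for by the complementary arc, which the generic additive bound $\eta_i+2g/3$ cannot exploit.
\end{itemize}

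In the second case there are three sub-cases. If $a\in A_2$, you get $\eta_2\leq g$ directly. If all these orientations are cluster orientations, (C2) would put all of~$C$ into~$T_a$, which is impossible, so (C1) gives a certificate of value at most $2g$. If some orientation is not a cluster orientation, the remaining argument of~\cite[Lemma~10]{kadria2026tighter} gives $\eta_2\leq g$.

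Only from level~$2$ on does your generic step take over: a certificate of value at most $2\eta_i+4g/3$, or $\eta_{i+1}\leq\eta_i+2g/3$, using radius at most $\eta_i+2g/3\leq kg/3$.
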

\ifC
We prove \Cref{lem:odd-transfer} in \Cref{app:odd-orientation}, following~\cite[Lemmas~10--12]{kadria2026tighter} with~\cite[Lemma~5]{kadria2026improvedsicomp} replaced by (C1) and~(C2). Phase~1 and one search at level~$1$ either give a certificate of value at most~$2g$ or show that $\eta_2\leq g$. Then, as in the inductive step of \Cref{lem:krsvz-analysis}, each level $2\leq i\leq\alpha$ either gives a certificate or shows that $\eta_{i+1}\leq\eta_i+2g/3$. Since $A_{\alpha+1}=\emptyset$, some step gives a certificate of value at most $2g+4(\alpha-1)g/3=2kg/3$, and all radii used are at most $kg/3$.
\else
\begin{proof}
Let $C$ be the shortest cycle fixed in \Cref{sec:odd}, and let $\eta_i$ be as defined there. The analysis of~\cite{kadria2026tighter} has two steps.
\begin{enumerate}
 \item[(1)] Either some certificate has value at most $2g$, or $\eta_2\leq g$~\cite[Lemma~10 and the proofs of Lemmas~11 and~12]{kadria2026tighter}. This step uses Phase~1 and one cluster search at level~$1$, from $a=p_1(u)$ for a vertex $u\in C$, with radius $\Delta\leq g$.
 \item[(2)] For $2\leq i\leq\alpha$, either some certificate has value at most $2\eta_i+4g/3$, or $\eta_{i+1}\leq\eta_i+2g/3$~\cite[Claims~11.1 and~12.1]{kadria2026tighter}. These proofs are the inductive step in the proof of \Cref{lem:krsvz-analysis} with $\Delta_{i-1}$ replaced by~$\eta_i$. They use only that one search at level~$i$ is $\Delta$-good for $\Delta=\eta_i+2g/3$ if $M(C)\leq g/3$, and for $\Delta=\eta_i+g-M(C)<\eta_i+2g/3$ if $M(C)>g/3$. 
\end{enumerate}
Suppose that no certificate has value at most $2kg/3$. By~(1), $\eta_2\leq g$. By induction, $\eta_i\leq g+2(i-2)g/3$ for $2\leq i\leq\alpha+1$: if (2) gave a certificate at level $i\leq\alpha$, its value would be at most $2g+4(i-2)g/3+4g/3\leq2g+4(\alpha-1)g/3=2kg/3$. Hence $\eta_{\alpha+1}<\infty$, contradicting $A_{\alpha+1}=\emptyset$. All radii used are at most $\max\{g,\ g+2(\alpha-1)g/3\}=kg/3$, so the searches involved are $\Delta$-good by assumption.

It remains to prove~(1) under our assumptions, since~\cite[Lemma~10]{kadria2026tighter} relies on~\cite[Lemma~5]{kadria2026improvedsicomp}. 

Let $u\in C$, and let $\{x,y\}$ be the farthest edge of~$C$ from~$u$ with $\delta(u,x)\geq\delta(u,y)$, let $P(u,x)$ be the shortest $u$--$x$ path along~$C$, and let $P_2(u,x)$ be the other $u$--$x$ path along~$C$. If $C\subseteq G[B_0(u)]$, then Phase~1 returns a cycle of length~$g\leq2g$. Otherwise, consider the search from $a=p_1(u)$ with radius $\Delta=\delta(u,A_1)+\ell(P_2(u,x))\leq g$. If $a\in A_2$, then directly $\delta(y,A_2)\leq\delta(u,A_1)+\delta(u,y)\leq g$, since $\delta(u,A_1)\leq\delta(u,x)\leq g/2$ in this case, so we may assume that $a\in A_1\setminus A_2$.

Orient every edge of~$C$ away from~$u$, as in~\cite[Lemma~10]{kadria2026tighter}: orient each edge other than $\{x,y\}$ along a shortest path from~$u$, and orient $\{x,y\}$ as $(y,x)$. By the triangle inequality, every such orientation $(s,t)$ has key
\[
 \delta(a,s)+\ell(s,t)\leq\delta(u,A_1)+\delta(u,s)+\ell(s,t)\leq\delta(u,A_1)+\ell(P_2(u,x))=\Delta,
\]
since $\delta(u,s)+\ell(s,t)$ is at most $g/2\leq\ell(P_2(u,x))$ for the edges other than $\{x,y\}$, and at most the length of one of the two $u$--$x$ paths for $\{x,y\}$. If all these orientations are cluster orientations of $\Cl(a)$, then (C2) would put all edges of~$C$ into the tree~$T_a$, which is impossible, so (C1) gives a certificate of value at most $2\Delta\leq2g$. Otherwise, one of these orientations is not a cluster orientation, and this is the only property of the search that the remaining argument of~\cite[Lemma~10]{kadria2026tighter} uses. It shows that $\min\{\delta(x,A_2),\delta(y,A_2)\}\leq g$, and hence that $\eta_2\leq g$~\cite[proofs of Lemmas~11 and~12]{kadria2026tighter}.

\end{proof}

\fi

\begin{proof}[Proof of \Cref{thm:deterministic-2k3} for odd~$k$]
By \Cref{lem:first-level}, the first level and Phase~1 take $O((m+n^{1+2/k})\log n)$ time. As in the proof of \Cref{lem:deterministic-hierarchy}, the truncated bunches of levels $1,\ldots,\alpha$ have total size $\sum_{i=1}^{\alpha}\sum_v|B_{A_i,A_{i+1}}^{<\rho_0}(v)|=O(\alpha nq)$, and their construction takes $O((m+nq)\alpha\log n)$ time. As in the proof of \Cref{thm:deterministic-girth}, Phases~2 and~3 take $O((m+nq)\alpha\log n)$ time, so the total time is $O((m+n^{1+2/k})k\log n)$.

If $U_0\leq2kg/3$, the certificate of value~$U_0$ suffices. Otherwise $\rho_0>kg/3$, and by \Cref{cor:bounded-good}, every search $\ClusterOrCycleBounded(a,\rho_0)$ is $\Delta$-good for every $\Delta\leq kg/3<\rho_0$. Hence \Cref{lem:odd-transfer} gives a certificate of value at most $2kg/3$.
\end{proof}

\section*{Use of AI tools}
We used AI assistants based on large language models in this research, in each case the latest models available in October 2026 under a standard individual paid plan. ChatGPT (OpenAI), Claude (Anthropic) and Gemini (Google) were used to edit, write and revise the paper, to fix typos and to improve clarity. All models were also used to check for possible issues with correctness, omissions from the literature review, and inaccurate citations of previous results.
ChatGPT (OpenAI) was used to discuss the ideas and to better understand previous papers. Claude (Anthropic) was used to create the illustrations of the paper. 
The author verified all proofs and ideas in the paper, proofread the entire paper, and takes full responsibility for its content.

\bibliography{articles}

\ifC
\clearpage
\appendix

\section{Proof of \texorpdfstring{\Cref{lem:selection}}{the selection lemma}}
\label{app:selection-proof}
\selectionlemma*

\section{Proof of \texorpdfstring{\Cref{lem:batched-growth}}{the correctness of growth}}
\label{app:growth-proof}
\growthlemma*

\section{Implementation of growth calls and proof of \texorpdfstring{\Cref{lem:one-level}}{the one-level construction}}
\label{app:implementation}
\onelevellemma*

\section{Proof of \texorpdfstring{\Cref{lem:odd-transfer}}{the odd-k lemma}}
\label{app:odd-orientation}
\oddtransferlemma*

\fi

\end{document}